\setlist{topsep=2pt,itemsep=2pt,parsep=0.1pt,partopsep=0.1pt}
\pgfplotsset{compat=newest}
\colorlet{kw}{RoyalBlue}
\definecolor{com}{rgb}{0,0.6,0.3}
\algrenewcommand\algorithmicfunction{\textcolor{kw}{\textbf{function}}}
\algrenewcommand\algorithmicwhile{\textcolor{kw}{\textbf{while}}}
\algrenewcommand\algorithmicfor{\textcolor{kw}{\textbf{for}}}
\algrenewcommand\algorithmicif{\textcolor{kw}{\textbf{if}}}
\algrenewcommand\algorithmicelse{\textcolor{kw}{\textbf{else}}}
\algrenewcommand\algorithmicreturn{\textcolor{kw}{\textbf{return}}}
\algrenewcommand\algorithmicthen{\!\textcolor{kw}{\textbf{:}}}
\algrenewcommand\algorithmicdo{}
\algnewcommand\Break{\textcolor{kw}{\textbf{break}}}%
\algnewcommand\Continue{\textcolor{kw}{\textbf{continue}}}%
\algrenewcommand{\algorithmiccomment}[1]{{\color{com}\%#1}}
\newcommand{\ALGtikzmarkcolor}{RoyalBlue}
\newcommand{\ALGtikzmarkextraindent}{4pt}
\newcommand{\ALGtikzmarkverticaloffsetstart}{-.5ex}
\newcommand{\ALGtikzmarkverticaloffsetend}{-.5ex}
\newcounter{ALG@tikzmark@tempcnta}
\newcommand\ALG@tikzmark@start{%
    \global\let\ALG@tikzmark@last\ALG@tikzmark@starttext%
    \expandafter\edef\csname ALG@tikzmark@\theALG@nested\endcsname{\theALG@tikzmark@tempcnta}%
    \tikzmark{ALG@tikzmark@start@\csname ALG@tikzmark@\theALG@nested\endcsname}%
    \addtocounter{ALG@tikzmark@tempcnta}{1}%
}
\def\ALG@tikzmark@starttext{start}
\newcommand\ALG@tikzmark@end{%
    \ifx\ALG@tikzmark@last\ALG@tikzmark@starttext
    \else
        \tikzmark{ALG@tikzmark@end@\csname ALG@tikzmark@\theALG@nested\endcsname}%
        \tikz[overlay,remember picture] \draw[\ALGtikzmarkcolor,thick] let \p{S}=($(pic cs:ALG@tikzmark@start@\csname ALG@tikzmark@\theALG@nested\endcsname)+(\ALGtikzmarkextraindent,\ALGtikzmarkverticaloffsetstart)$), \p{E}=($(pic cs:ALG@tikzmark@end@\csname ALG@tikzmark@\theALG@nested\endcsname)+(\ALGtikzmarkextraindent,\ALGtikzmarkverticaloffsetend)$) in (\x{S},\y{S})--(\x{S},\y{E});%
    \fi
    \gdef\ALG@tikzmark@last{end}%
}
\apptocmd{\ALG@beginblock}{\ALG@tikzmark@start}{}{\errmessage{failed to patch}}
\pretocmd{\ALG@endblock}{\ALG@tikzmark@end}{}{\errmessage{failed to patch}}
\newcommand{\hide}[1]{} 
\newcommand{\eps}{\varepsilon}
\newcommand{\ceil}[1]{\ensuremath{\left\lceil#1\right\rceil}}
\newcommand{\C}{\mathcal{C}}
\newcommand{\F}{\mathcal{F}}
\newcommand{\Q}{\mathcal{Q}}
\newcommand{\R}{\mathcal{R}}
\renewcommand{\S}{\mathcal{S}}
\renewcommand{\O}{\mathcal{O}}
\newcommand{\roundell}[1]{\text{round}_\ell(#1)}
\newcommand{\col}{\text{col}}
\newcommand{\container}[1]{\text{container}(#1)}
\newcommand{\reduce}{\text{reduce}}
\newcommand{\DP}[1]{\text{dp}[#1]}
\renewcommand{\P}{\mathcal{P}}
\definecolor{BgYellow}{HTML}{FFF59C}
\definecolor{FrameYellow}{HTML}{F7A600}
\definecolor{BgPink}{HTML}{EF6FA7}
\definecolor{FramePink}{HTML}{E5446E}
\definecolor{BgGreen}{HTML}{C7D92D}
\definecolor{FrameGreen}{HTML}{89B23B}
\definecolor{BgBlue}{HTML}{45BEE9}
\definecolor{FrameBlue}{HTML}{31A8C9}
\definecolor{BgWhite}{HTML}{D8D8D8}
\definecolor{FrameWhite}{HTML}{7F7F7F}
\definecolor{BgBrown}{HTML}{8E7A45}
\definecolor{FrameBrown}{HTML}{6B5B32}
\definecolor{Bordeaux}{RGB}{116, 2, 51}
\newtheoremstyle{mystyle}{}{}{}{}{\sffamily\bfseries}{:}{ }{}
\newtheoremstyle{cstyle}{}{}{}{}{\sffamily\bfseries}{:}{ }{}
\renewenvironment{proof}[1][\proofname] {\par\pushQED{\qed}{\normalfont\sffamily\bfseries\topsep6\p@\@plus6\p@\relax #1\@addpunct{:} }}{\popQED\endtrivlist\@endpefalse}
\newcommand{\coolqed}[1]{\ensuremath{\square}}
\theoremstyle{mystyle}{\newtheorem{definition}{Definition}}
\theoremstyle{mystyle}{\newtheorem{proposition}[definition]{Proposition}}
\theoremstyle{mystyle}{\newtheorem{theorem}[definition]{Theorem}}
\theoremstyle{mystyle}{\newtheorem{lemma}[definition]{Lemma}}
\theoremstyle{mystyle}{}
\theoremstyle{mystyle}{}
\theoremstyle{mystyle}{}
\theoremstyle{mystyle}{}
\theoremstyle{mystyle}{}
\theoremstyle{definition}{}
\theoremstyle{cstyle}{}
\newtheoremstyle{warn}{}{}{}{}{\normalfont}{}{ }{}
\theoremstyle{warn}
\newcommand{\warningsign}[1]{\tikz[scale=#1,every node/.style={transform shape}]{\draw[-,line width={#1*0.8mm},red,fill=yellow,rounded corners={#1*2.5mm}] (0,0)--(1,{-sqrt(3)})--(-1,{-sqrt(3)})--cycle;
\node at (0,-1) {\fontsize{48}{60}\selectfont\bfseries!};}}
\newenvironment{talign*}{\csname align*\endcsname}{\endalign}
\newlength{\myl}
\titleformat{\section}[block]{}{}{4pt}{
$ $\\
\settowidth{\myl}{\fontsize{24}{30}0\sffamily\bfseries\thesection0}
\setlength{\myl}{\dimexpr \myl-3pt}
\begin{tcolorbox}
    [enhanced jigsaw,
     borderline south={2pt}{0pt}{black!70},
     sharp corners,
     boxsep=0pt,
     colback=white,
     boxrule=0pt,
     leftrule=\myl,
     colframe=black!70,
     overlay unbroken and first ={%
     \node[minimum width=1cm,
        anchor=west,
        font=\Large\sffamily\bfseries,
        white]
     at (frame.west) {\fontsize{24}{30}\sffamily\bfseries\thesection};
     }
    ]
    \fontsize{24}{30}\sffamily\bfseries#1
\end{tcolorbox}
}
\titleformat{\subsection}[block]{}{}{4pt}{
$ $\\
\settowidth{\myl}{0\fontsize{16}{18}{\sffamily\bfseries\thesubsection}0}
\setlength{\myl}{\dimexpr \myl-3pt}
\begin{tcolorbox}
    [enhanced jigsaw,
     borderline south={2pt}{0pt}{black!70},
     sharp corners,
     boxsep=0pt,
     colback=white,
     boxrule=0pt,
     leftrule=\myl,
     colframe=black!70,
     overlay unbroken and first ={%
     \node[minimum width=1cm,
        anchor=west,
        font=\Large\sffamily\bfseries,
        white]
     at (frame.west) {\fontsize{16}{18}\sffamily\bfseries\thesubsection};
     }
    ]
    \fontsize{16}{18}\sffamily\bfseries#1
\end{tcolorbox}
}
\titleformat{\subsubsection}[block]{}{}{4pt}{
$ $\\
\settowidth{\myl}{\fontsize{10}{12}0\sffamily\bfseries\thesubsubsection0}
\setlength{\myl}{\dimexpr \myl-3pt}
\begin{tcolorbox}
    [enhanced jigsaw,
     borderline south={2pt}{0pt}{black!70},
     sharp corners,
     boxsep=0pt,
     colback=white,
     boxrule=0pt,
     leftrule=\myl,
     colframe=black!70,
     overlay unbroken and first ={%
     \node[minimum width=1cm,
        anchor=west,
        font=\Large\sffamily\bfseries,
        white]
     at (frame.west) {\fontsize{10}{12}\sffamily\bfseries\thesubsubsection};
     }
    ]
    \fontsize{10}{12}\sffamily\bfseries#1
\end{tcolorbox}
}
\titlespacing*{\section}{0pt}{5pt}{5pt}
\titlespacing*{\subsection}{0pt}{5pt}{5pt}
\titlespacing*{\subsubsection}{0pt}{5pt}{5pt}
\title{A parameterized approximation scheme for the 2D-Knapsack problem with wide items}
\title{\huge\sffamily\bfseries A parameterized approximation scheme for the 2D-Knapsack problem with wide items}
\author{
  {\sffamily \bfseries Mathieu Mari
  \footnote{Partially supported by the ERC CoG grant TUgbOAT no. 772346.}}
  \\\sffamily University of Warsaw\\\sffamily IDEAS-NCBR
  \and
  {\sffamily \bfseries Timothé Picavet}\\\sffamily ENS de Lyon\\\sffamily Aalto University
  \and 
  {\sffamily \bfseries Michał Pilipczuk
  \footnote{
    This work is a part of project BOBR that has received funding from the European Research Council (ERC) under the European Union’s Horizon 2020 research and innovation programme (grant agreement No. 948057).
  }}
  \\\sffamily University of Warsaw
}
\date{}
\begin{document}

\setlength{\abovedisplayskip}{3pt}
\setlength{\belowdisplayskip}{3pt}
\setlength{\abovedisplayshortskip}{0pt}
\setlength{\belowdisplayshortskip}{0pt}
\maketitle

\begin{tikzpicture}[remember picture, overlay]
\node[inner sep=0pt, above left = 0.1\linewidth] at (current page.south east) 
    {\includegraphics[height=0.1\textwidth]{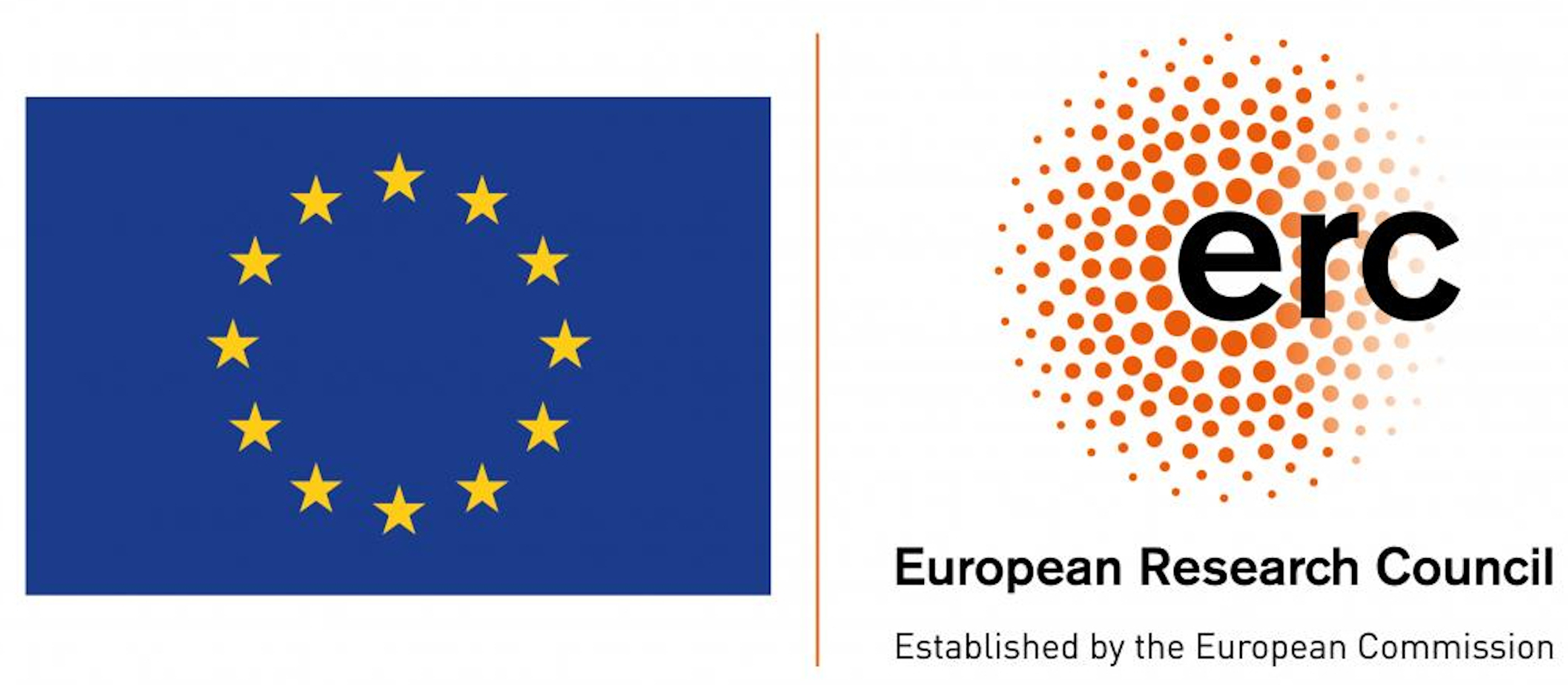}};
\end{tikzpicture}

\renewcommand{\abstractname}{\Large\sffamily Abstract}

\begin{abstract}
    We study a natural geometric variant of the classic {\sc{Knapsack}} problem called {\sc{2D-Knapsack}}: we are given a set of axis-parallel rectangles and a rectangular bounding box, and the goal is to pack as many of these rectangles inside the box without overlap. Naturally, this problem is $\mathsf{NP}$-complete. Recently, Grandoni et al. [ESA'19] showed that it is also $\mathsf{W}[1]$-hard when parameterized by the size $k$ of the sought packing, and they presented a parameterized approximation scheme (PAS) for the variant where we are allowed to rotate the rectangles by 90° before packing them into the box. Obtaining a PAS for the original {\sc{2D-Knapsack}} problem, without rotation, appears to be a challenging open question. 

    In this work, we make progress towards this goal by showing a PAS under the following assumptions:
    \begin{itemize}[nosep]
    \item both the box and  all the input rectangles
    have integral, polynomially bounded sidelengths;
    \item every input rectangle is wide --- its width is greater than its height; and
    \item the aspect ratio of the box is bounded by a constant.
    \end{itemize}
    Our approximation scheme relies on a mix of various parameterized and approximation techniques, including color coding, rounding, and searching for a structured near-optimum packing using dynamic programming.
\end{abstract}

\section{Introduction}

We study a natural geometric variant of the classic {\sc{Knapsack}} problem, called {\sc{2D Knapsack}} and defined as follows. On input, we are given a rectangular box $B$ and a set $\R$ of items, each being a rectangle. The task is to place as many items from $\R$ as possible in $B$ so that the placed items do not overlap. Note that this problem generalizes classic {\sc{Knapsack}}: given an instance of {\sc{Knapsack}} with items of sizes $a_1,\ldots,a_n$ and a knapsack of size $K$, we can create an instance of {\sc{2D Knapsack}} with $B$ being the $K\times 1$ rectangle and $\R$ consisting of $a_i\times 1$ rectangles, for all $i\in \{1,\ldots,n\}$.

As in the case of {\sc{Knapsack}}, there are two natural variants of the problem depending on how the input is encoded. In the {\em{binary variant}}, both $B$ and the rectangles of $\R$ have integral sidelengths encoded in binary, hence these sidelengths can be exponential in the total input size. In the {\em{unary variant}} the difference is that the sidelengths are encoded in unary, or equivalently, one assumes that all the sidelengths are bounded polynomially in the total input size. In this work we focus on the unary variant.

Again as in the case {\sc{Knapsack}}, adopting the unary variant helps tremendously for the design of algorithms for {\sc{2D Knapsack}}, for instance due to allowing to perform dynamic programming over the dimensions of the box. While the problem remains $\mathsf{NP}$-hard even in the unary variant~\cite{LeungTWYC90}, Adamaszek and Wiese~\cite{AdamaszekW15} gave a QPTAS for the problem in this setting.  The best approximation factor known to be achievable in polynomial time in the unary variant is $4/3+\eps$ due to Galvez et al.~\cite{Galvez00RW21}; earlier, a $(2+\eps)$-approximation was given in~\cite{JansenZ04} and a $(558/325 + \eps)$-approximation was given in~\cite{Galvez+21}. It is believed that the problem should admit a PTAS, but this question remains widely open to this day. 

We remark that the abovementioned works also study the weighted variant of the problem. In this work we only consider the unweighted version, hence an interested reader is invited to the relevant discussion in the references.

Recently, Grandoni et al.~\cite{grandoni2019pas} proposed to approach the question about the existence of a PTAS for {\sc{2D Knapsack}} by adding parameterization by the solution size to the picture. That is, they presented a {\em{parameterized approximation scheme}} ({\em{PAS}}) with running time of the form $k^{\O(k/\eps)}\cdot n^{\O(1/\eps^3)}$ that either finds a packing of size at least $(1-\eps)k$ or correctly concludes that there is no packing of size $k$. However, this result applies {\em{only}} to the variant of the problem where each input rectangle can be rotated by $90^\circ$ before packing it into the box, and the question about the existence of a PAS for {\sc{2D Knapsack}} without rotation was explicitly left open by Grandoni et al. This is in contrast with the other mentioned works on {\sc{2D Knapsack}} which all apply both to the variant with rotation and without rotation (for the variant with rotation, Galvez et al.~\cite{Galvez00RW21} reported a better approximation ratio of $5/4+\eps$).

We note that the PAS of Grandoni et al. actually works in the binary variant of the problem. Also, reliance on approximation is probably necessary: as proved in~\cite{grandoni2019pas}, the exact version of the problem is $\mathsf{W}[1]$-hard when parameterized by $k$.

\subparagraph*{Our contribution.} In this work we approach --- though not completely solve --- the open problem left by Grandoni et al.~\cite{grandoni2019pas} by giving a parameterized approximation scheme with running time of the form $f(k,\eps,\delta)\cdot n^{g(\eps)}$ for {\sc{2D Knapsack}} under the following assumptions:
\begin{itemize}[nosep]
    \item we consider the unary variant of the problem, thus the dimensions of the box are bounded polynomially in $n$;
    \item we assume that every item is {\em{wide}}: its width is not smaller than its height; and
    \item we assume that the {\em{aspect ratio}} (ratio between the dimensions) of the box is at most~$\delta$.
\end{itemize}
See \cref{thm:main} for a formal statement of our result and an explicit formula for the running time.
Note that in the context of the variant with rotation, the second assumption can be always achieved by rotating every input rectangle so that it is wide, while the third assumption for $\delta=1$ can be obtained by scaling both the box and all rectangles on input. 

Let us elaborate on our approach and how it is different from the approach of Grandoni et al.~\cite{grandoni2019pas}. The approach of~\cite{grandoni2019pas} can be summarized as follows. 
\begin{itemize}[nosep]
    \item Consider a hypothetical packing $\S$ of size $k$.
    \item {\em{Freeing a strip:}} Remove a small fraction of $\S$ and shift the items slightly in order to free up a horizontal strip of height $N/k^b$ at the bottom of the box, where $N$ is the sidelength of the box and $b=\O(1/\eps)$ is an integer. This strip can now accommodate all {\em{thin}} items: those of height at most $N/k^{b+1}$.
    \item {\em{Resource augmentation:}} After the previous step, we may assume that all items are {\em{large}} --- both dimensions are at least $N/k^{b+1}$ --- and there is still a considerable strip free at the bottom of the box. Now, one can round the heights of items up to the nearest multiplicity of, say, $N/k^{b+2}$ and argue that even the rounded items can be packed, due to the free strip at the bottom of the box. After rounding the rectangles have at most $k^{\O(1)}$ different heights, so keeping $k$ narrowest rectangles of each possible height gives us a polynomial in $k$ number of candidate rectangles that can be reasonably used in the packing. This easily leads to a PAS.  
\end{itemize}
The possibility of rotating rectangles is crucially used in the second step, freeing a strip. Without this assumption, thin rectangles come in two different flavors: there are {\em{wide}} rectangles, of very small height and possibly large width, and symmetric {\em{tall}} rectangles. The strategy of freeing a strip presented in~\cite{grandoni2019pas} can be applied also in the setting without rotation, but then it results in either freeing a horizontal strip at the bottom of the box, or a vertical strip at the left side of the box; there is no control over which strip will be freed. Consequently, only one type of thin rectangles can be disposed of as a result of freeing the strip, and there is no control over which one it is. 

In the setting of Grandoni et al., our assumptions on the problem essentially mean that we allow the existence of wide rectangles, but not of the tall ones. The application of the approach of Grandoni et al. could result in freeing a vertical strip, in which the wide rectangles cannot fit. Consequently, we do not see how to fix the approach presented in~\cite{grandoni2019pas} to solve our case where only wide rectangles are present and no rotation is allowed. We therefore abandon this approach and propose a completely new one.

Instead, we prove a different result about the existence of a well-structured near-optimum solution. Our structural lemma (\Cref{lem:structural}) says that at the cost of sacrificing a small fraction of rectangles, the considered packing can be divided into regions $B_1,B_2,\ldots,B_m$ so that:
\begin{itemize}
    \item Every region $B_i$ is delimited by the left side of the box, the right side of the box, and two $x$-monotone axis-parallel polylines connecting the left and the right side. Moreover, each of the polylines defining the division  $B_1,B_2,\ldots,B_m$ consists of $\O(1/\eps)$ segments.
    \item Every region $B_i$ is either {\em{light}} --- contains only $\O(1/\eps^2)$ rectangles from the packing --- or {\em{roundable}} --- rectangles within $B_i$ could be packed inside $B_i$ even after rounding them to the nearest multiple of (roughly) $N_1/k^2$, where $N_1$ is the width of the box.
\end{itemize}
Having such a structural lemma, a near-optimum solution can be constructed using a bottom-up dynamic programming that guesses the regions $B_i$ one by one. For each region $B_i$ we consider two cases: either $B_i$ is light and a solution within it can be guessed (essentially) by brute-force, or $B_i$ is roundable and using the same trick as in~\cite{grandoni2019pas}, one can restrict attention to $k^{\O(1)}$ many different candidates for rectangles that will be packed into $B_i$.

There is a technical caveat in the plan presented above. Namely, in dynamic programming we need to make sure that we do not reuse the same rectangle from $\R$ in two or more different regions $B_i$. We resolve this issue using color-coding. Namely, by applying color-coding upfront we may assume that all the rectangles in $\R$ are colored with $k$ colors, and we look for a packing consisting of rectangles of pairwise different colors. Then our dynamic programming keeps track of the subset of colors that have already been used, which adds only another dimension of size $2^k$ to the dynamic programming table.

\section{Preliminaries}

\subparagraph*{Basic terminology.}
For a positive integer $N$, we write $[N] = \{1,2,\dots, N\}$.

A {\em{rectangle}} is a pair of positive integers $R=(w, h)\in \mathbb{Z}_{+}^2$, and a {\em{placed rectangle}} is a set of the form $Q=[x,x+w]\times[y,y+h]\subseteq \mathbb{R}^2$, where $R=(w,h)$ is a rectangle and $(x,y)\in \mathbb{Z}^2$ is the bottom-left corner of $Q$; we will also say that such $Q$ is a {\em{placement}} of $R$. In the notation, we will sometimes treat placed rectangles as their non-placed counterparts; the meaning of this will be always clear from the context.

Both for placed and non-placed rectangles, $w$ and $h$ are  called the {\em{width}} and the {\em{height}}, respectively, and may be denoted by $w(P)$ and $h(P)$, where $P$ is the (placed) rectangle in question. The {\em{interior}} of a placed rectangle $Q=[x,x+w]\times[y,y+h]$ is the set $I(Q)=(x,x+w)\times(y,y+h)$. Two placed rectangles {\em{overlap}} if their interiors intersect. 

A {\em{zone}} is simply a subset of $\mathbb{R}^2$. For a zone $Z$ and a set of placed rectangles $\R$, by $\R[Z]= \{R \in \R \mid R\subseteq Z\}$ we denote the set of all rectangles from $\R$ that are entirely contained in $Z$.
For a zone $Z$ and a set of non-placed rectangles $\R$, a {\em{packing}} of $\R$ in $Z$ is a set $\R'=\{R'\colon R\in \R\}$ consisting of pairwise non-overlapping placed rectangles contained in $Z$, where $R'$ is a placement of $R$ for each $R\in \R$.

\subparagraph{The problem and the main result.}
In the (parameterized variant of) {\sc{2D Knapsack}} problem, we are given a rectangular zone $B=[0,N_1] \times [0,N_2]\subseteq \mathbb{R}^2$ called the {\em{box}}, where $N_1,N_2$ are positive integers, a set $\R$ of rectangles called {\em{items}}, and an integer $k$. The question is whether there exists a packing of some $k$ items from $\R$ in the box~$B$. 

In the context of an instance $(B,\R,k)$ of {\sc{2D Knapsack}},
the {\em{size}} of the box $B$ is $\|B\|=N_1+N_2$, and the {\em{aspect ratio}} of $B$ is $\delta(B)=\max\left(\frac{N_1}{N_2},\frac{N_2}{N_1}\right)$. Further, an item $R\in \R$ is {\em{wide}} if $w(R)\geq h(R)$. Note that in the variant of the problem where rotations by $90^\circ$ are allowed, one may always rotate the items so that they are wide. When the instance $(B,\R,k)$ is clear from the context, by a {\em{packing}} we mean a packing of a subset of $\R$ in $B$.

With these definitions in place, we can state our main result.
\begin{theorem}\label{thm:main}
    There exists an algorithm that given an accuracy parameter $\eps>0$ and an instance $(B,\R,k)$ of {\sc{2D Knapsack}}, where $\R$ consists only of wide items, either returns a packing of size at least $(1-\eps)k$ or correctly concludes that there is no packing of size $k$. The running time of the algorithm is $\delta(B)^{\O(k)}\cdot (k+1/\eps)^{\O(k+1/\eps^2)}\cdot (|\R|\|B\|)^{\O(1/\eps^2)}$.
\end{theorem}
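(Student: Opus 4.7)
The plan is to combine color coding with the structural result of \cref{lem:structural} and a bottom-up dynamic programming over the regions this lemma produces. First, I would apply a standard color coding step: assign to each rectangle of $\R$ an independent uniformly random color in $[k]$, and look only for \emph{rainbow} packings, i.e.\ those whose rectangles carry pairwise distinct colors. A fixed target packing of size $k$ is rainbow with probability $k!/k^k = 2^{-\Oh(k)}$, so repeating $2^{\Oh(k)}$ times (or derandomizing via perfect hash families) loses only a factor $2^{\Oh(k)}$ in the running time. The benefit is that the dynamic program can now enforce the ``no rectangle used twice'' constraint across the disjoint regions of the decomposition simply by carrying the set of already-used colors as a bitmask of size~$2^k$.

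Next, I invoke \cref{lem:structural} on a hypothetical optimum packing $\S^\star$ of size $k$: up to sacrificing an $\eps$-fraction of its rectangles, $\S^\star$ decomposes into regions $B_1,\ldots,B_m$ delimited by $x$-monotone axis-parallel polylines $P_0, P_1, \ldots, P_m$ of complexity $\Oh(1/\eps)$, where $P_0$ is the bottom side and $P_m$ is the top side of the box $B$. I would run a dynamic programming whose states are pairs $(P, C)$, where $P$ is a candidate polyline with $\Oh(1/\eps)$ integer-coordinate breakpoints and $C \subseteq [k]$ is the subset of colors already used; the stored value is the largest number of rainbow rectangles that can be packed below $P$ while using exactly the colors in $C$. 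Since each segment of $P$ is determined by $\Oh(1)$ integer coordinates bounded by $\|B\|$, the number of candidate polylines is at most $\|B\|^{\Oh(1/\eps)}$, yielding a total state space of $2^k\cdot\|B\|^{\Oh(1/\eps)}$.

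The heart of the algorithm is the transition from a frontier $P$ to the next frontier $P'$ above it, which must enumerate the possible ways to fill the slab region $B_i$ lying between them. By the structural guarantee, one of two cases applies. If $B_i$ is \emph{light}, it contains only $\Oh(1/\eps^2)$ items from the target packing, so I would brute-force the choice of these items (at most $|\R|^{\Oh(1/\eps^2)}$ options, the feasibility of each checked by a simple sweep-line DP exploiting the integrality and polynomial bound on coordinates) and update the color bitmask accordingly. If $B_i$ is \emph{roundable}, the lemma guarantees that rounding each rectangle's width up to the nearest multiple of roughly $N_1/k^2$ still admits a valid packing inside $B_i$; after rounding, the rectangles fall into $\Oh(k^2)$ width classes, and within each (width class, color) pair it suffices to keep a bounded number of candidates of smallest height, so the content of $B_i$ can be guessed using $(k + 1/\eps)^{\Oh(k + 1/\eps^2)}$ choices.

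The main technical hurdle I anticipate is ensuring that the width-rounding step inside a roundable region is compatible with its polyline boundary rather than with a straight horizontal edge: naively rounding widths upward could push rectangles past the upper polyline $P'$, and \cref{lem:structural} must be formulated precisely so as to absorb this slack. The other subtlety is that two distinct regions must not reuse the same item of $\R$, which is precisely what the color-coding bitmask enforces. Combining the $2^k\cdot\|B\|^{\Oh(1/\eps)}$ state count, the per-transition enumeration bounded by $|\R|^{\Oh(1/\eps^2)} + (k+1/\eps)^{\Oh(k+1/\eps^2)}$, the color-coding overhead of $2^{\Oh(k)}$, and a bound of the form $m = \Oh(k\,\delta(B))$ on the number of frontier levels coming from the aspect ratio and the requirement that each region is nonempty, yields the running time $\delta(B)^{\Oh(k)}\cdot(k+1/\eps)^{\Oh(k+1/\eps^2)}\cdot(|\R|\|B\|)^{\Oh(1/\eps^2)}$ stated in the theorem.
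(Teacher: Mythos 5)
Your high-level plan --- color coding, then a bottom-up DP over $x$-monotone polyline frontiers $(P,C)$, with each slab handled either by brute-force (light case) or by rounding and reduction to a small candidate pool (roundable case) --- is exactly the route the paper takes in \cref{lem:wide-instance}. But there is a genuine gap: you invoke \cref{lem:structural} directly on a hypothetical optimum packing $\S^\star$ of size $k$, without ensuring its hypothesis. The structural lemma requires that the packing consist only of rectangles of width at least $2\ell$; it is false for rectangles of arbitrarily small width, because the whole mechanism (the conflict graph separator and horizontal shifting by $\ell$) breaks down, and the $\ell$-rounding would then blow up small widths by an unbounded multiplicative factor. The paper handles this in a separate reduction (\cref{lem:large_width}): remove every rectangle of width below $N_1/(\delta k^2)$ --- call them $\mathcal{W}$ --- solve the shrunken instance with parameter $k - |\mathcal{W}|$, then restore $\mathcal{W}$ by stacking them vertically (they fit because they are wide, hence also short, and the box height is at least $N_1/\delta$), possibly at the cost of evicting one large rectangle from the returned packing. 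Without some such reduction your DP would have to enumerate rounded rectangles of unboundedly many distinct widths, and the claimed state/transition bounds do not follow.

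A secondary error concerns where the $\delta(B)^{\O(k)}$ factor comes from. You attribute it to a bound $m = \O(k\,\delta(B))$ on the number of frontier polylines, but $m$ is bounded by $k$ independently of $\delta$ (each $P_i$ uses at least one internal vertex and the paths are internally disjoint), and in any case $m$ never enters the running time of the DP, which iterates over \emph{all} candidate polyline pairs, not over the $m$ polylines of some fixed structured solution. The $\delta(B)^{\O(k)}$ actually arises from the rounding grid: with minimum width $2\ell = N_1/(\delta k^2)$, the rounding unit is $\ell' = \ell^2/N_1$, so the number of distinct rounded widths is $\O(N_1/\ell') = \O(\delta^2 k^4)$; the reduced candidate pool $\R'$ thus has size $\O(\delta^2 k^6)$, and the roundable-slab transition enumerates $|\R'|^{\O(k)}$ subsets, which is where $\delta^{\O(k)}$ appears. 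This is tied to the same missing reduction: the threshold $N_1/(\delta k^2)$ is chosen precisely so that the small-rectangle stacking argument in \cref{lem:large_width} fits inside the box height $N_2 \geq N_1/\delta$.
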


\subparagraph*{Polylines and containers.}
In our algorithm for {\sc{2D Knapsack}} we will decompose the box into zones delimited by borders of low complexity, allowing those borders to be efficiently guessed.
Formally, each border will be a polyline defined as follows.
\begin{definition}[Axis-parallel polyline]
    An \emph{axis-parallel polyline} $\P$ is a union of horizontal or vertical segments $S_1, S_2, \dots, S_m$ such that for $1 \leq i \leq m -1$, the end of segment $S_i$ is the beginning of segment $S_{i+1}$.
    Then $m$ is called the {\em{complexity}} of $\P$.
\end{definition}
For brevity, axis-parallel polylines will be just called polylines.
We will only work with \emph{monotone} polylines, meaning that all horizontal coordinates of points on $S_j$ will not be smaller than the horizontal coordinates of the points on $S_i$, whenever $i<j$.
A polyline $P$ \emph{crosses} a placed rectangle $R$ if $P$ intersects the interior of $R$.

Next we introduce {\em{containers}}. We will use them to capture the idea of decomposing the box into~zones.

\begin{definition}[Container]\label{def:container}
    A \emph{container} $\C$ is a union of horizontal or vertical segments $S_1, S_2, \dots, S_m$ such that:
    \begin{itemize}
        \item for $1 \leq i \leq m -1$, the end of segment $S_i$ is the beginning of segment $S_{i+1}$, and
        \item the end of segment $S_m$ is the beginning of segment $S_1$.
    \end{itemize}
    Furthermore, we require that $\C$ is weakly-simple (as introduced in \cite{kusakari1999shortest,demaine2007geometric,chang2014detecting}) in the following sense: if $\gamma\colon S^1\to \mathbb{R}^2$ is a parameterization of $\C$, then for every $\eps > 0$ there exists an injective continuous $\gamma_\eps\colon S^1 \to \mathbb{R}^2$ such that $\|\gamma - \gamma_\eps\|_\infty \leq \eps$.     

    The {\em{inside}} of the container, denoted $I(\C)$, is the bounded open region delimited by the segments.
    Moreover, the {\em{complexity}} of the container is defined as $m$.
\end{definition}
Note that the inside of a container is not necessarily connected.
For clarification, see \Cref{fig:containers}.
\begin{figure}[H]
    \centering
    \begin{tikzpicture}[very thick, num/.style={fill=white,anchor=center,inner sep=2pt}]

        \path[-{Stealth[left]}]
            (0,1) edge node[style=num] {1} (2.5,1)
            (2.5,1) edge node[style=num,anchor=west] {2} (2.5,0)
            (2.5,0) edge node[style=num,anchor=south] {3} (5,0)
            (5,0) edge node[style=num,anchor=east] {4} (5,1)
            (5,1) edge node[style=num] {5} (6,1)
            (6,1) edge node[style=num] {6} (6,-1)
            (6,-1) edge node[style=num] {7} (4.5,-1)
            (4.5,-1) edge node[style=num,anchor=east] {8} (4.5,0)
            (4.5,0) edge node[style=num,anchor=north] {9} (1,0)
            (1,0) edge node[style=num,anchor=west] {10} (1,-1)
            (1,-1) edge node[style=num,anchor=north] {11} (0,-1)
            (0,-1) edge node[style=num] {12} (0,1);
        \fill[fill=Gray,opacity=0.3] (0,1) -- (2.5,1) -- (2.5,0) -- (5,0) -- (5,1) -- (6,1) -- (6,-1) -- (4.5,-1) -- (4.5,0) -- (1,0) -- (1,-1) -- (0,-1) -- cycle;

        \path[draw=NavyBlue,opacity=0.3]
            (0,1) edge[out=60,in=120] (2.5,1)
            (2.5,1) edge[out=300,in=180] (3.5,0.2)
            (3.5,0.2) edge[out=360,in=240] (5,1)
            (4.5,-1) edge[out=300,in=240] (6,-1)
            (6,-1) edge[out=60,in=300] (6,1)
            (6,1) edge[out=120,in=60] (5,1)
            (4.5,-1) edge[out=120,in=360] (3.5,-0.2)
            (3.5,-0.2) edge[out=180,in=60] (1,-1)
            (1,-1) edge[out=240,in=300] (0,-1)
            (0,-1) edge[out=120,in=240] (0,1);
        
        \path[-{Stealth[left]}]
            (8,1) edge node[style=num] {1} (10.5,1)
            (10.5,1) edge node[style=num,anchor=west] {2} (10.5,0)
            (10.5,0) edge node[style=num,anchor=south] {3} (12.5,0)
            (12.5,0) edge node[style=num,anchor=east] {4} (12.5,-1)
            (12.5,-1) edge node[style=num] {5} (14,-1)
            (14,-1) edge node[style=num] {6} (14,1)
            (14,1) edge node[style=num] {7} (13,1)
            (13,1) edge node[style=num,anchor=east] {8} (13,0)
            (13,0) edge node[style=num,anchor=north] {9} (9,0)
            (9,0) edge node[style=num,anchor=west] {10} (9,-1)
            (9,-1) edge node[style=num,anchor=north] {11} (8,-1)
            (8,-1) edge node[style=num] {12} (8,1);

        \path[draw=RedOrange,opacity=0.3]
            (8,1) edge[out=60,in=120] (10.5,1)
            (10.5,1) edge[out=300,in=160] (11.5,0)
            (11.5,0) edge[out=340,in=120] (12.5,-1)
            (12.5,-1) edge[out=300,in=240] (14,-1)
            (14,-1) edge[out=60,in=300] (14,1)
            (14,1) edge[out=120,in=60] (13,1)
            (13,1) edge[out=240,in=20] (11.5,0)
            (11.5,0) edge[out=200,in=60] (9,-1)
            (9,-1) edge[out=240,in=300] (8,-1)
            (8,-1) edge[out=120,in=240] (8,1);

    \end{tikzpicture}
    \caption{Left panel: an example of a container where the order of the segments is given by the numbers, and some injective candidate $\gamma_\eps$ in light blue. Right panel: an example of a non-container (the paths cross in the middle), and some candidate $\gamma_\eps$ in light red, that is not injective.}
    \label{fig:containers}
\end{figure}
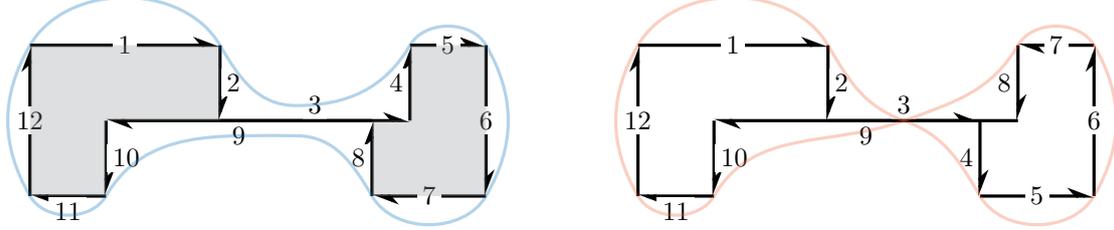

\section{Exact algorithm}

In this section, we give an exact algorithm for the problem, which will be later used as a subroutine in the proof of \Cref{thm:main}. The point here is that we allow the box to be delimited by an arbitrary container, and we measure the running time in the complexity of the container. Formally, we will prove the following statement.

\begin{lemma}\label{lem:algo_bounded_rects}
    Given a set of rectangles $\R$ and a container $\C$ of complexity $m$, one can determine whether there is a packing of the rectangles of $\R$ inside $\C$ in time $(m+|\R|)^{\O(|\R|)}$.
\end{lemma}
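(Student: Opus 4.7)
The plan is to search for a packing by enumerating a modest family of combinatorial ``skeletons'' of a canonical packing. The first step is to show that whenever a valid packing of $\R$ inside $\C$ exists, among all such packings there is a \emph{canonical} one minimizing the lexicographic pair $\bigl(\sum_{R\in\R} x(R),\ \sum_{R\in\R} y(R)\bigr)$, where $(x(R),y(R))$ denotes the bottom-left corner of the placement of $R$. Since placement coordinates are integers by definition, this minimum is attained. A one-unit-shift argument then gives that in the canonical packing every rectangle $R$ is simultaneously \emph{left-supported} --- its left edge shares a nondegenerate subsegment with either a vertical segment of $\C$ or the right edge of another placed rectangle --- and \emph{bottom-supported} --- the symmetric condition at its bottom edge: otherwise $R$ could be shifted one unit to the left or one unit down and still be validly placed, contradicting minimality of $\sum x$ or, respectively, of $\sum y$ while $\sum x$ is preserved.

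Next I would introduce \emph{skeletons}: an assignment to each $R\in\R$ of a left-supporter, chosen from the at most $m$ vertical segments of $\C$ together with the other $|\R|-1$ rectangles, and of a bottom-supporter, chosen analogously from the horizontal segments of $\C$ and the other rectangles. The number of skeletons is at most $(m+|\R|)^{2|\R|}$, and the algorithm simply iterates through all of them. Given a skeleton, the $x$-coordinates are forced by equations of the form $x(R)=c$ (left-supporter is a container segment at $x=c$) or $x(R)=x(R')+w(R')$ (left-supporter is rectangle $R'$); the $y$-coordinates are forced analogously. Viewing the two families of supporter pointers as directed graphs on $\R$, whenever either graph contains a cycle the system is inconsistent and the skeleton is discarded; otherwise all $2|\R|$ coordinates are uniquely determined in $\O(|\R|)$ time by topological-order evaluation. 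For each skeleton whose forced placement exists, the algorithm verifies in $\O(|\R|\cdot m + |\R|^2)$ time that every placed rectangle lies inside $I(\C)$ and that no two placements overlap; it returns the packing if so, and reports that no packing exists if no skeleton succeeds.

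Correctness is immediate: whenever a valid packing exists, the canonical packing exists and its associated skeleton is among those enumerated, so the algorithm will recover a valid packing from it. The overall running time is $(m+|\R|)^{2|\R|}\cdot\mathrm{poly}(m,|\R|)=(m+|\R|)^{\O(|\R|)}$, as claimed.

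The main delicate point I anticipate is the containment test against a weakly-simple container: since $I(\C)$ need not be simply connected and $\C$ may touch itself along pieces of measure zero, a naive point-in-polygon routine applied to $\C$ itself can be ambiguous, and the test must instead be performed against an injective perturbation $\gamma_\eps$ of the kind guaranteed by \Cref{def:container}. For integer-coordinate placements this is still a routine $\O(m)$ procedure per rectangle and does not affect the stated runtime.
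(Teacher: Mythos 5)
Your proof is correct, but it takes a genuinely different route from the paper's. The paper pushes the packing bottom-left and then proves (\Cref{prop:bottom_left_rect}, via planarity of a support digraph and a rotation-index argument for smooth closed curves) that some rectangle must rest on the container boundary on \emph{both} its left and bottom sides; the algorithm then branches on that rectangle and the pair of container segments supporting it, carves it out of the container, and recurses --- which forces the proof to generalize to unions of containers and to track the complexity growth ($m+6$ per carved rectangle). Your skeleton enumeration sidesteps all of this: you never need a single rectangle supported by the container on two sides, only that each rectangle has \emph{some} left- and bottom-supporter (possibly another rectangle), which follows from a much softer exchange argument on the lex-minimal packing; and you never modify the container, so no surgery or complexity bookkeeping is needed. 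The resulting system of difference constraints is solved by a topological sort, and both approaches land at $(m+|\R|)^{\O(|\R|)}$. Two small caveats: your ``shift by one unit'' step silently assumes the container's segments also lie on integer coordinates (placements are integral by definition, but the lemma does not state this for $\C$; all of the paper's applications satisfy it). This is easily made robust by shifting by the supremum of feasible offsets and observing that whatever blocks further movement is, by definition, a supporter. And your observation about testing containment against a weakly-simple boundary is a fair one --- the paper's own proof glosses over the analogous verification step.
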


The main idea of our algorithm is to push the packing bottom-left, as explained in the next definition.

\begin{definition}
    A packing $\R$ inside a container $\C$ is said to be \emph{pushed bottom-left} if for every rectangle $R\in \R$, its left (resp. bottom) side intersects either a vertical (resp. horizontal) segment of the container, or a right (resp. bottom) side of another rectangle $R'\in\R$.
\end{definition}

It is not hard to see that if a packing is pushed bottom-left, then there must be a rectangle in the packing whose left and bottom sides rest on the perimeter of the container. This is formally proved in the following statement.

\begin{proposition}\label{prop:bottom_left_rect}
    Suppose $\R$ is a non-empty packing of rectangles inside a container $\C$ that is pushed bottom-left.
    Then there exists a rectangle $R\in\R$ such that its left side intersects a vertical segment of the container and its bottom side intersects an horizontal segment of the container.
\end{proposition}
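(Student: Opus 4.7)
The plan is to identify the desired rectangle by an iterative extremal argument that keeps ``moving further bottom-left'' until the container blocks both directions.

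Take $R_0$ to be the lex-minimum rectangle of $\R$ ordered by (bottom-$y$, then left-$x$). Its bottom must lie on a horizontal segment of $\C$, because any rectangle supporting its bottom would have strictly smaller bottom-$y$, contradicting the minimality of $R_0$. If $R_0$'s left is also on a vertical segment of $\C$, we are done. Otherwise the pushed-bottom-left property forces some rectangle of $\R$ to support $R_0$'s left (i.e., to have right-$x$ equal to $R_0$'s left-$x$), and we iterate.

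Suppose inductively that $R_0, R_1, \ldots, R_{k-1}$ have been produced with the left side of each $R_i$ supported by an actual rectangle (not the container). Define $\R_k = \{R \in \R : R\text{'s right-}x \leq R_{k-1}\text{'s left-}x\}$; by the inductive hypothesis this set contains the rectangle supporting $R_{k-1}$'s left and is hence non-empty, so we may let $R_k$ be the lex-minimum of $\R_k$. The key claim is that $R_k$'s bottom is also on a horizontal segment of $\C$. Argue by contradiction: suppose some $R' \in \R$ supports $R_k$'s bottom. Lex-minimality of $R_k$ in $\R_k$ forces $R' \notin \R_k$, so $R'$'s right-$x$ strictly exceeds $R_{k-1}$'s left-$x$. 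Combined with the $x$-overlap between $R'$ and $R_k$ (inherent in a bottom support) and the fact that $R_k$ lies weakly to the left of $R_{k-1}$'s left-$x$, the $x$-range of $R'$ straddles $R_{k-1}$'s left-$x$ and so intersects $R_{k-1}$'s $x$-range on a positive-length segment. I now squeeze $R_k$'s bottom-$y$ from both sides: on one hand, $R_k \in \R_{k-1}$ has strictly smaller left-$x$ than $R_{k-1}$, so lex-minimality of $R_{k-1}$ in $\R_{k-1}$ forces $R_k$'s bottom-$y$ to be strictly greater than $R_{k-1}$'s; on the other hand, the rectangle supporting $R_{k-1}$'s left lies in $\R_k$ with bottom-$y$ strictly less than $R_{k-1}$'s top-$y$ (from the $y$-overlap inherent in a left support), so lex-minimality of $R_k$ in $\R_k$ gives $R_k$'s bottom-$y$ strictly less than $R_{k-1}$'s top-$y$. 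Thus $R_k$'s bottom-$y$ lies strictly inside $R_{k-1}$'s $y$-range; since $R'$ has top-$y$ equal to $R_k$'s bottom-$y$ and bottom-$y$ strictly below, the $y$-range of $R'$ intersects $R_{k-1}$'s $y$-range. Combined with the $x$-overlap, $R'$ overlaps $R_{k-1}$ as placed rectangles, contradicting the packing.

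The sequence $R_0$'s left-$x$, $R_1$'s left-$x$, $\ldots$ is a strictly decreasing sequence of integers (since $R_k \in \R_k$ forces $R_k$'s right-$x \leq R_{k-1}$'s left-$x$, hence $R_k$'s left-$x$ is strictly smaller), so the iteration must terminate. The only possible reason to stop is that $R_K$'s left is on the container, and by the claim above $R_K$'s bottom is too, giving the desired rectangle. The hard part of the proof is the pincer argument of the previous paragraph: both applications of lex-minimality are essential, one to trap $R_k$'s bottom-$y$ above $R_{k-1}$'s bottom-$y$ and the other to trap it below $R_{k-1}$'s top-$y$, so that the hypothetical supporter $R'$ is forced to overlap $R_{k-1}$. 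Everything else (base case, strict monotonicity of the left-$x$ coordinates, and termination) is a straightforward consequence.
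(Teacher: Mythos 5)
Your proof is correct and takes a genuinely different route from the paper's. The paper turns the ``rests on'' relation into a digraph $D$ on $\R$ (an arc from $R$ to any rectangle on whose top or right side $R$'s bottom or left side rests), proves $D$ acyclic by a topological argument --- threading a smooth closed curve through a hypothetical cycle of rectangles, noting that its unit tangent never points into the open positive orthant, and invoking the rotation-index theorem for simple closed curves to reach a contradiction --- and then takes any sink of $D$. You instead exhibit a sink directly by a lex-minimum chase: the rectangle lex-minimal in $(\text{bottom-}y,\ \text{left-}x)$ has its bottom on the container, and as long as the current rectangle's left side is supported by another rectangle you pass to the lex-minimum of the set of rectangles lying entirely to the left of its left edge. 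Your pincer argument, which traps $R_k$'s bottom-$y$ strictly between $R_{k-1}$'s bottom-$y$ and top-$y$ and thereby forces any hypothetical bottom-supporter $R'$ of $R_k$ to overlap $R_{k-1}$, is exactly what replaces the paper's acyclicity argument. The trade-off is clear: your route is entirely elementary and avoids all planar-curve and differential-geometric machinery, at the cost of the delicate coordinate bookkeeping in the pincer step; the paper's route is shorter to state but imports a nontrivial theorem from differential geometry. (Both proofs tacitly read ``intersects'' in the definition of pushed bottom-left as intersection in more than a single point --- you need it for the $x$- and $y$-overlaps in the pincer, and the paper needs it for a sink of $D$ to actually have its left and bottom on the container --- so this is a shared definitional convention, not a defect of your argument.)
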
 
\begin{proof}
    Create a directed graph $D$ with vertex set $\R$, where there is an edge $(R,R')$ if the bottom side of $R$ intersects the top side of $R'$ on more than a single point, or if the left side of $R$ intersects the right side of $R'$ on more than a single point.
    Let us show that $D$ has no directed cycle, so for contradiction suppose $R_1,\ldots,R_\ell$ is a directed cycle in $D$. In what follows, all indices behave cyclically modulo $\ell$.

    For each $i\in [\ell]$, select an arbitrary point $p_i$ in the intersection of $R_{i-1}$ and $R_i$ that is neither a corner of $R_{i-1}$ nor a corner of $R_i$. Further, observe that one can construct a curve $\gamma_i\colon [0,d_i] \to R_i$, where $d_i$ is the length of $\gamma_i$, such that:
    \begin{itemize}
        \item $\gamma_{i}$ is smooth (formally, $C^1$) and monotone in both directions,
        \item $\|\gamma_{i}'(t)\|_2=1$ for all $t\in [0,d_i]$,
        \item $\gamma_{i}(0)=p_i$ and $\gamma_{i}(d_i)=p_{i+1}$, and
        \item the tangent of $\gamma_{i}$ at $p_i$ and $p_{i-1}$ is perpendicular to the respective side and faces the inside (resp. outside) of $R_i$. For instance if $p_i$ is on the top side of $R_i$, we require $\gamma'_{i}(0)=(0,-1)$, and if $p_{i+1}$ is on the left side of $R_i$, we require $\gamma'_{i}(d_i)=(-1,0)$.
    \end{itemize}
    An example of such a construction is shown below.
    \begin{center}    
    \begin{tikzpicture}[scale=0.11,
            rect/.style={very thick,draw=Black},
            curve/.style={very thick,color=NavyBlue,postaction={decorate}},
            point/.style={circle,fill=NavyBlue,inner sep=2pt},
            decoration={markings, mark= at position 0.5 with {\arrow{Latex[round]}}}
        ] 
        \draw[style=rect] (0,0) rectangle (20,10);
        \node[style=point,label=above:\color{NavyBlue}$a$] (a) at (14,10) {};
        \node[style=point,label=below left:\color{NavyBlue}$b$] (b) at (6,0) {};
        \draw[style=curve] (a) to[out=270,in=90] (b);
        \draw[style=rect] (25,0) rectangle (45,10);
        \node[style=point,label=above:\color{NavyBlue}$a$] (c) at (35,10) {};
        \node[style=point,label=below left:\color{NavyBlue}$b$] (d) at (25,5) {};
        \draw[style=curve] (c) to[out=270,in=0] (d);
        \draw[style=rect] (50,0) rectangle (70,10);
        \node[style=point,label=right:\color{NavyBlue}$a$] (e) at (70,8) {};
        \node[style=point,label=above left:\color{NavyBlue}$b$] (f) at (50,2) {};
        \draw[style=curve] (e) to[out=180,in=0] (f);
        \draw[style=rect] (75,0) rectangle (95,10);
        \node[style=point,label=right:\color{NavyBlue}$a$] (g) at (95,5) {};
        \node[style=point,label=below left:\color{NavyBlue}$b$] (h) at (85,0) {};
        \draw[style=curve] (g) to[out=180,in=90] (h);
        \draw[->] (a) -- ++(0,-4);
        \draw[->] (c) -- ++(0,-4);
        \draw[->] (d) -- ++(-4,0);
        \draw[->] (e) -- ++(-4,0);
        \draw[->] (b) -- ++(0,-4);
        \draw[->] (f) -- ++(-4,0);
        \draw[->] (g) -- ++(-4,0);
        \draw[->] (h) -- ++(0,-4);
        \pgfdeclarepatternformonly{swnestripes2}{\pgfpoint{0cm}{0cm}}{\pgfpoint{1cm}{1cm}}{\pgfpoint{1cm}{1cm}}
        {
            \foreach \i in {0.1cm, 0.3cm,...,0.9cm}
            {
             \pgfpathmoveto{\pgfpoint{\i}{0cm}}
             \pgfpathlineto{\pgfpoint{1cm}{1cm - \i}}
             \pgfpathlineto{\pgfpoint{1cm}{1cm - \i + 0.1cm}}
             \pgfpathlineto{\pgfpoint{\i - 0.1cm}{0cm}}
             \pgfpathclose%
             \pgfusepath{fill}
             \pgfpathmoveto{\pgfpoint{0cm}{\i}}
             \pgfpathlineto{\pgfpoint{1cm - \i}{1cm}}
             \pgfpathlineto{\pgfpoint{1cm - \i - 0.1cm}{1cm}}
             \pgfpathlineto{\pgfpoint{0cm}{\i + 0.1cm}}
             \pgfpathclose%
             \pgfusepath{fill}
            }
        }
        \begin{scope}
            \clip (110,5) circle (10cm);
            \pattern[pattern=swnestripes2, pattern color=RedOrange!50] (110,5) rectangle (120,15);
        \end{scope}
        \draw[very thick,NavyBlue] (110,5) circle (10cm);
        \foreach \x in {90,100,...,360}
            \draw[->] (110,5) -- ({110+10*sin(\x)},{5+10*cos(\x)});
    \end{tikzpicture}
    \end{center}

    Concatenating all the curves $\gamma_i$ in order yields a smooth closed curve $\gamma\colon S\to \bigcup_{i=1}^\ell R_i$ without self-crossings such that $\|\gamma'(t)\|_2=1$ for all $t\in S$, where $S$ is the circle of length $\sum_{i=1}^\ell d_i$. Here is the crucial observation: by the way we oriented the arcs in $D$, the vector $\gamma'(t)$ is never in the positive orthant (i.e. $\gamma'(t)$ has not both coordinates positive), for any $t\in S$.
    
    \begin{center}    
    \begin{tikzpicture}[
            scale=0.2,
            rect/.style={very thick,draw=Black},
            curve/.style={very thick,color=NavyBlue,postaction={decorate}},
            decoration={markings, mark= at position 0.5 with {\arrow{Latex[round]}}}
        ]
        \node (0) at (-20,25) {};
        \draw[style=rect] (-35,20) rectangle (-15,25);
        \node (1) at (-17.5,20) {};
        \draw[style=rect] (-20,15) rectangle (-5,20);
        \node (2) at (-7.5,15) {};
        \draw[style=rect] (-10,10) rectangle (5,15);
        \node (3) at (2.5,10) {};
        \node (C) at (3,5) {\color{NavyBlue}$\gamma$};
        \draw[style=rect] (0,0) rectangle (20,10);
        \node (4) at (0,5) {};
        \draw[style=rect] (-25,8) rectangle (0,4);
        \node (5) at (-25,7.5) {};
        \draw[style=rect] (-40,15) rectangle (-25,5);
        \node (6) at (-30,5) {};
        \draw[style=curve] (0.center) to[out=270,in=90] (1.center);
        \draw[style=curve] (1.center) to[out=270,in=90] (2.center);
        \draw[style=curve] (2.center) to[out=270,in=90] (3.center);
        \draw[style=curve] (3.center) to[out=270,in=0] (4.center);
        \draw[style=curve] (4.center) to[out=180,in=0] (5.center);
        \draw[style=curve] (5.center) to[out=180,in=90] (6.center);
    \end{tikzpicture}
    \end{center}
    
    However by Theorem 2 of \cite[section 5-7, page 402]{do2016differential}, a smooth closed curve in the plane without self-crossings has rotation index $\pm 1$, where the rotation index of a curve is the number of times its tangent vector turns around the origin.
    This means that by the intermediate value theorem, for every $\alpha\in]0,2\pi[$, there exists a point of the curve where the tangent vector is at angle $\alpha$ with the $x$-axis, and hence belongs to the positive orthant. This is a contradiction.

    We conclude that $D$ is acyclic, hence it has a sink $R$ --- a rectangle with out-degree $0$.
    Therefore, $R$ is the rectangle we want: its left side intersects a vertical segment of container and its bottom side intersects a horizontal segment of the container. 
\end{proof}

With \Cref{prop:bottom_left_rect} established, we can conclude our goal using a simple branching strategy.

\begin{proof}[Proof of \Cref{lem:algo_bounded_rects}]
    We prove a stronger statement where we allow $\C$ to be the union of several disjoint containers, and we let $m$ be the sum of their complexities.

    Suppose there is a packing packing $\S$ of the rectangles of $\R$ into $\C$.
    We can assume without loss of generality that $\S$ is pushed bottom-left within every container of $\C$.
    Now by \Cref{prop:bottom_left_rect}, there exists a rectangle $R$ such that its left (resp. bottom) side intersects a vertical (resp. horizontal) segment of a container in $\C$.
    
    So here is a recursive procedure to solve the problem.
    First, guess (by trying all possibilities) the rectangle $R$ satisfying the condition above; there are $n$ different possibilities for $R$, where $n=|\R|$.
    Second, guess which pair of segments of the containers intersect the left and the bottom side of $R$; there are at most $m^2$ possibilities.
    Place rectangle $R$ according to the latter guess and verify that it is indeed fully contained in $\C$. Then, ``carve out'' the rectangle, i.e., define a new union of containers $\C'$ so that the $I(\C') = I(\C)\setminus R$.

    It is easy to see that the total complexity of the new union of containers $\C'$ is at most $m + 6$ and it can be computed in time polynomial in $m$.
    Then, recurse on $\R'$ and $\C'$ where $\R'= \R\setminus \{R\}$.

    It is clear that the algorithm is correct. To analyze its running time, note that the recursion tree has depth bounded by $n$ and branching bounded by $n(m+6n)^2$, hence it consists of $(m+n)^{\O(n)}$ nodes. The internal computation at each node take time polynomial in $n$ and $m$, so the total running time of $(m+n)^{\O(n)}$ follows.
\end{proof}

\section{Giving structure to the packing}

In this section we prove structural results that can be summarized as follows: at the cost of sacrificing a small fraction of the packing, one can apply resource augmentation --- round the packing --- so that it gains a certain structure. Once this structure is achieved, we will argue later that structured packings can be efficiently computed using dynamic programming.

Throughout this section we fix an instance $(B,\R,k)$ of {\sc{2D Knapsack}}, where $B=[0,N_1]\times [0,N_2]$ and  $\R$ consists only of wide rectangles: $w(R)\geq h(R)$ for all $R\in \R$.

To perform resource augmentation, we need the following notion of rounding a rectangle. 
Informally, a rounded rectangle is the original rectangle with its width rounded up to the nearest multiple of $\ell'=\ell^2/N_1$; here is a formal definition.

\begin{definition}[Rounded rectangles]
    Let $R=(w,h)$ be a rectangle and $\ell>0$ be a positive real.
    Then the \emph{$\ell$-rounded}  rectangle $\roundell{R}$ is the rectangle $(\ell'\ceil{w/\ell'}, h)$ where $\ell'= \ell^2/N_1$. For a set $\R$ of rectangles, we define similarly $\roundell{\R} = \{\roundell{R} \colon R\in\R\}$.
\end{definition}

As mentioned, the key idea behind our algorithm is to look for a specifically structured packing. This structure is quantified formally in the following definition.
Broadly speaking, we look for a packing that is partitioned into regions of low complexity and such that the rectangles in each region behave well.

\begin{definition}[Structured packing]
Fix any $\eps,\ell>0$.
Consider a set of pairwise non-intersecting monotone polylines $P_1, P_2, \dots, P_m$ contained in the box $B$, where each $P_i$  starts at the left side of $B$ and finishes at the right side of $B$, and the polylines $P_1,\ldots,P_m$ are naturally numbered from bottom to top. We define the partition of the box $B$ into regions $B_0,B_1,\ldots,B_m$ so that each region $B_i$ is delimited by the polylines $P_i$ and $P_{i+1}$ and the left and the right side of $B$ (here we define for convenience $P_0$ to be the bottom side of $B$ and $P_{m+1}$ to be the top of $B$).

We say that a packing of rectangles $\Q$ in $B$ is an \emph{$(\eps,\ell)$-structured packing} if every rectangle in $\Q$ has width at least $2\ell$ and there exist polylines $P_1, P_2, \dots, P_m$ as above, each of complexity at most $4/\eps+1$, such that no rectangle of $\Q$ is crossed by any polyline $P_i$, $i\in[m]$, and for each $i$,$0 \leq i \leq m$ at least one of the following conditions holds:
\begin{itemize}
    \item $|\Q[B_i]| \leq 2/\eps^2$, or
    \item $\roundell{\Q[B_i]}$ can be packed into $B_i$.
\end{itemize}
\end{definition}

The rest of the section is dedicated to proving the following structural lemma (recall that the instance $(B,\R,k)$ is fixed in the context):
\begin{lemma}[structural lemma]\label{lem:structural}
    Suppose $\ell>0$ is a positive real such there is a packing of size $k$ consisting of rectangles of width at least $2\ell$ each.
    Then for every $\eps>0$, there exists also an $(\eps,\ell)$-structured packing of size at least $(1-3\eps)k$.
\end{lemma}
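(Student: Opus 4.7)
My plan is to transform the given packing $\Q_0$ into the desired structured packing $\Q$ through three sequential stages, each losing at most $\eps k$ rectangles, for a total of $3\eps k$ as permitted.

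In the first stage I sort the rectangles of $\Q_0$ by the $y$-coordinate of their bottom sides and greedily group consecutive rectangles into \emph{candidate strips} of at most $2/\eps^2$ rectangles each. The boundaries between these groups become target $y$-levels for the future polylines $P_1, P_2, \dots, P_m$; no polyline is actually drawn yet, only a list of target cut levels is produced.

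In the second stage I route each polyline near its target level. A monotone polyline of complexity at most $4/\eps + 1$ can accommodate roughly $1/\eps$ detours around rectangles, since each detour consumes at most $4$ segments of the budget. Consequently, each polyline can locally avoid the few rectangles that would otherwise cross it. Rectangles that would force more than this many detours on a single polyline are sacrificed. A global averaging argument --- shifting the target level within a window of height $\eps N_2$ and picking the position that minimizes straddlers --- keeps the total sacrifice in this stage bounded by $\eps k$.

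In the third stage I verify that every resulting strip satisfies one of the two structured conditions. Light strips (containing at most $2/\eps^2$ rectangles) arise directly from the grouping of the first stage. For each remaining \emph{heavy} strip, I prove roundability by exhibiting a packing of the strip after widening each rectangle's width to the nearest multiple of $\ell' = \ell^2/N_1$. Because every rectangle has width at least $2\ell$, the relative perturbation is at most $\ell/(2 N_1)$ per rectangle. I exploit this via a left-shift argument --- leveraging \Cref{prop:bottom_left_rect} to bring each strip's packing into a canonical bottom-left form --- then show that any remaining width excess is eliminated by sacrificing at most $\eps k$ rectangles in aggregate across all strips.

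The main obstacle I foresee is the roundability argument in the third stage. A packing within a heavy strip is a priori topologically arbitrary, so a direct area-counting argument does not suffice. I expect to combine the bottom-left canonical form from \Cref{prop:bottom_left_rect} with the wide-items assumption to reveal an implicit shelf-like structure within each heavy strip, and then use a careful accounting of shelf widths --- each shelf having width at most $N_1$ and total rounding overhead at most (shelf size)$\cdot \ell'$ --- to bound the number of globally sacrificed rectangles by $\eps k$.
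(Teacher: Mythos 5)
There is a genuine gap, and it occurs at both of the places where your proposal is vaguest. First, the polyline-routing stage: cutting between groups defined by sorting bottom $y$-coordinates gives you no control over how many rectangles straddle a candidate cut level, and hence no control over the complexity of the monotone polyline needed to separate consecutive groups. Your averaging argument over a window of height $\eps N_2$ bounds the \emph{average} number of straddlers by roughly $\sum_R h(R)/(\eps N_2)$, which for wide items can be as large as $\Omega(\sqrt{k\,\delta(B)}/\eps)$ and is not $O(1/\eps)$; moreover each detour around a straddler can collide with further rectangles, so a complexity budget of $4/\eps+1$ does not suffice for an arbitrary level. The paper avoids this entirely by \emph{deriving} the cut curves from the packing's horizontal-visibility structure: it builds a conflict graph (\Cref{lem:planarity}), takes a maximal family of internally disjoint $s$-$t$ paths with at most $1/\eps$ internal vertices each, and uses the top/bottom/middle polylines of those paths, whose complexity is at most $4/\eps+1$ by construction (\Cref{prop:polyline_complexity}) and which cross no surviving rectangle.

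Second, and more seriously, the roundability of heavy strips. Rounding widths up requires freeing \emph{horizontal} space inside the strip, and the only way to do that is to delete a set of rectangles that disconnects the left side of $B$ from the right side in the horizontal-visibility sense; otherwise some left-to-right chain of touching rectangles spans width $N_1$ and cannot absorb any width increase. Your bottom-left canonical form (\Cref{prop:bottom_left_rect}) and shelf accounting give no such guarantee: a heavy strip in your $y$-sorted decomposition could contain many pairwise disjoint chains of just two wide rectangles each spanning the full width, forcing you to delete a constant fraction of the strip rather than an $\eps$ fraction. The paper's key idea, which your proposal is missing, is that by maximality of the chosen family of short $s$-$t$ paths, every remaining $s$-$t$ path inside a region has more than $1/\eps$ internal vertices, so Menger's theorem (\Cref{prop:separator}) yields an $st$-separator of size at most $\eps|V_i|$; deleting it permits a leftward shift by $\ell$ and then a horizontal scaling by $1+\ell/N_1$ that absorbs the rounding (\Cref{lem:approx_widths}). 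Without a structural reason why heavy strips admit small separators, your third stage cannot be completed. (A minor internal inconsistency: your first stage caps every group at $2/\eps^2$ rectangles, so it is unclear where heavy strips come from at all.)
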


This section is divided into 3 parts. 
In the first subsection we study the assumed packing of size $k$ and define an associated {\em{conflict graph}}, which turns out to be planar.
In the second subsection, we show that if there exists a packing of rectangles in a specific zone on the box, then at the cost of removing a few rectangles, there exists a packing of the rounded rectangles into a slightly bigger rounded version of the zone.
In the last section, we define the specific polylines that will divide the zones and finish the proof of the \Cref{lem:structural}.

By assumption, there exists a packing $\S$ in $B$ consisting of $k$ rectangles from $\R$, each of width at least~$2\ell$. Fix $\S$ for the remainder of this section.

\subsection{Conflict graph}

For the definition of the conflict graph, we need the following notion of horizontal visibility.

\begin{definition}
    Two different placed rectangles $R,R'\in \S$ \emph{see each other} if there is an horizontal segment $s$ intersecting the interior of the right side of $R$ and the interior of the left side of $R'$ (or vice versa) such that $s$ does not intersect any other rectangle of $\S$. Notice that $s$ may consist of a single point, if $R$ and $R'$ are touching. For convenience, we extend this definition to the case where $R$ is the left side of $B$ or $R'$ is the right side of $B$.
    For instance with the left side we associate the placed rectangle $R_{\textrm{left}}=[-1,0]\times [0,N_2]$ and say that $R$ and the left side see each other if $R_{\textrm{left}}$ and $R$ see each other; similarly for the right side.
    The left side and the right side \emph{do not} see each other.
\end{definition}

Note two rectangles intersecting only at their common corner do {\em{not}} see each other.

\begin{definition}[Conflict graph]
    For a packing $\S$, we define the \emph{conflict graph} of $\S$ to be the graph $G$ defined as follows: the vertex set contains all the rectangles of $\S$, and in addition there are two special vertices $s$ and $t$ identified with the left side and the right side of $B$, respectively. Two vertices of $G$ are adjacent if and only if they see each other.
\end{definition}

It is easy to see that the conflict graph is planar; see \Cref{fig:conflict}.
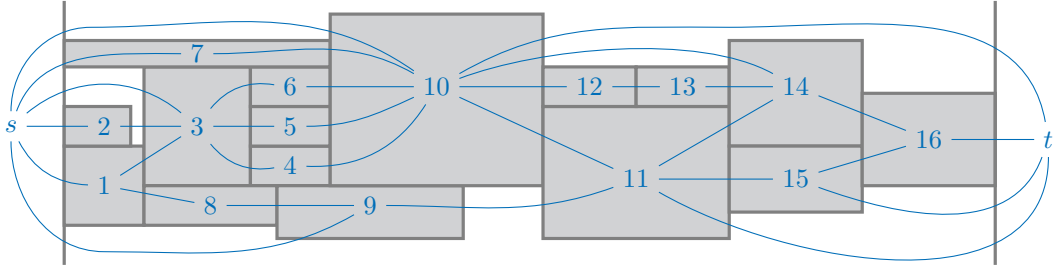
\begin{figure}[H]
    \centering
    \begin{tikzpicture}[scale=0.175,
        rect/.style={very thick,draw=black!50,fill=Gray!40},graph/.style={color=NavyBlue}]
        \draw[style=rect] (0,0) -- (0,20);
        \draw[style=rect] (0,3) rectangle (6,9);
        \node[style=graph] (1) at (3,6) {1};
        \draw[style=rect] (0,9) rectangle (5,12);
        \node[style=graph] (2) at (3,10.5) {2};
        \draw[style=rect] (6,6) rectangle (14,15);
        \node[style=graph] (3) at (10,10.5) {3};
        \draw[style=rect] (14,6) rectangle (20,9);
        \node[style=graph] (4) at (17,7.5) {4};
        \draw[style=rect] (14,9) rectangle (20,12);
        \node[style=graph] (5) at (17,10.5) {5};
        \draw[style=rect] (14,12) rectangle (20,15);
        \node[style=graph] (6) at (17,13.5) {6};
        \draw[style=rect] (0,15) rectangle (20,17);
        \node[style=graph] (7) at (10,16) {7};
        \draw[style=rect] (6,6) rectangle (16,3);
        \node[style=graph] (8) at (11,4.5) {8};
        \draw[style=rect] (16,6) rectangle (30,2);
        \node[style=graph] (9) at (23,4.5) {9};
        \draw[style=rect] (20,6) rectangle (36,19);
        \node[style=graph] (10) at (28,13.5) {10};
        \draw[style=rect] (36,2) rectangle (50,12);
        \node[style=graph] (11) at (43,6.5) {11};
        \draw[style=rect] (36,12) rectangle (43,15);
        \node[style=graph] (12) at (39.5,13.5) {12};
        \draw[style=rect] (43,12) rectangle (50,15);
        \node[style=graph] (13) at (46.5,13.5) {13};
        \draw[style=rect] (50,9) rectangle (60,17);
        \node[style=graph] (14) at (55,13.5) {14};
        \draw[style=rect] (50,9) rectangle (60,4);
        \node[style=graph] (15) at (55,6.5) {15};
        \draw[style=rect] (60,6) rectangle (70,13);
        \node[style=graph] (16) at (65,9.5) {16};
        \draw[style=rect] (70,0) -- (70,20);
        \draw[style=graph] (11) to (14);
        \node[style=graph] (s) at (-4,10.5) {$s$};
        \node[style=graph] (t) at (74,9.5) {$t$};
        \draw[style=graph] (s) to[bend right] (1);
        \draw[style=graph] (s) to[out=270,in=180] (3,1) to[out=0,in=210] (9);
        \draw[style=graph] (s) to[out=90,in=180] (2,18) to[out=0,in=145] (10);
        \draw[style=graph] (s) to (2);
        \draw[style=graph] (s) to[out=70,in=180] (7);
        \draw[style=graph] (s) to[out=40,in=140] (3);
        \draw[style=graph] (2) to (3);
        \draw[style=graph] (1) to (3);
        \draw[style=graph] (1) to (8);
        \draw[style=graph] (3) to[bend left] (6);
        \draw[style=graph] (3) to (5);
        \draw[style=graph] (3) to[bend right] (4);
        \draw[style=graph] (8) to (9);
        \draw[style=graph] (7) to[out=0,in=150] (10);
        \draw[style=graph] (6) to (10);
        \draw[style=graph] (5) to[out=0,in=205] (10);
        \draw[style=graph] (4) to[bend right] (10);
        \draw[style=graph] (10) to (12);
        \draw[style=graph] (10) to (11);
        \draw[style=graph] (9) to[out=0,in=200] (11);
        \draw[style=graph] (10) to[bend left,looseness=0.7,out=20] (14);
        \draw[style=graph] (12) to (13);
        \draw[style=graph] (13) to (14);
        \draw[style=graph] (11) to (15);
        \draw[style=graph] (14) to (16);
        \draw[style=graph] (15) to (16);
        \draw[style=graph] (16) to (t);
        \draw[style=graph] (15) to[out=340,in=250] (t);
        \draw[style=graph] (11) to[out=330,in=270] (t);
        \draw[style=graph] (10) to[out=30,in=180] (50,18) to[out=0,in=100] (t);
    \end{tikzpicture}
    \caption{Example of a conflict graph.}\label{fig:conflict}
\end{figure}
We formalize this intuition in the following lemma.
\begin{lemma}\label{lem:planarity}
    For any packing $\S$, the conflict graph of $\S$ is planar.
\end{lemma}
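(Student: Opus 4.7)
The plan is to exhibit an explicit planar drawing of $G$. I would place each vertex $v_R$ at the center of the corresponding placed rectangle $R\in\S$, and treat the special vertices $s$ and $t$ analogously by using the auxiliary placed rectangles $R_{\mathrm{left}}$ and $R_{\mathrm{right}}$ sitting just outside $B$. For every edge $\{R,R'\}\in E(G)$, the definition of ``seeing each other'' provides a horizontal witness segment $\sigma_{R,R'}$ at some height $y_0$ in the interior of one rectangle's right side and the other's left side, avoiding every other rectangle; I fix one such segment per edge. The edge is then drawn as the three-piece polyline starting at $v_R$, going on a straight segment to the port $\sigma_{R,R'}\cap\partial R$, continuing along $\sigma_{R,R'}$, and finishing on a straight segment from $\sigma_{R,R'}\cap\partial R'$ to $v_{R'}$.

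The main step is to check that distinct polylines meet only at their common endpoints, which I would split into three cases. First, two interior straight segments inside a common rectangle $R$ share only $v_R$: for each height $y$ in the interior of $R$'s right (resp.\ left) side there is at most one other rectangle that $R$ sees at that height, namely the first one encountered moving right (resp.\ left), so distinct edges incident to $R$ give distinct ports, and two straight segments from $v_R$ to distinct boundary points of the convex region $R$ meet only at $v_R$. Second, two visibility segments either lie on distinct horizontal lines (trivially disjoint) or coincide with distinct maximal closed intervals of the free space $\{y=y_0\}\setminus\bigcup_{R\in\S} I(R)$ on a common line $y=y_0$ (hence also disjoint). Third, an interior straight segment of $R$ does not meet the visibility segment of any other edge, because such a visibility segment avoids the closed rectangle $R$ by definition, and in particular avoids both the interior of $R$ and the port on $\partial R$.

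The subtle point, and the place where I expect the main obstacle, is handling degenerate witness segments, which collapse to single points when two rectangles touch edge-to-edge, and ensuring that different edges incident to the same rectangle do not accidentally share a port. Since the remark after the definition of the conflict graph explicitly excludes rectangle pairs that meet only at a common corner, whenever $\{R,R'\}\in E(G)$ the set of valid heights for $\sigma_{R,R'}$ is an open interval of positive length along the overlap of the two relevant sides. I would therefore choose the heights $y_0$ for all edges simultaneously, using a small local perturbation inside each visibility range to guarantee that all ports on each individual rectangle are pairwise distinct and that no two zero-length witness segments coincide. Modulo these low-level checks, the construction is a genuine planar embedding of $G$, and \Cref{lem:planarity} follows.
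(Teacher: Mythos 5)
Your proof is correct and follows essentially the same approach as the paper's: place each vertex at the center of its rectangle and route each edge as a three-piece polyline through a chosen horizontal visibility segment. The paper dismisses the pairwise internal disjointness of these curves as "straightforward to check," while you spell out the verification; note though that the perturbation you hedge with is unnecessary, since for distinct edges incident to a common rectangle on the same side the admissible witness heights are automatically disjoint (a height at which two different rectangles were both visible would force the segment witnessing the farther one to pass through the interior of the nearer one).
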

\begin{proof}
    Let $G$ be the conflict graph of $\S$.
    For a rectangle $R\in\S$, we denote its associated vertex in $G$ by $v_R$.
    We define a planar embedding of $G$ as follows.
    We define the position of a vertex $v_R$ to be the center $c(R)=(x(R)+w(R)/2, y(R)+h(R)/2)$ of the corresponding rectangle.
    If there is an edge $e=v_Rv_{R'}$, choose $y\in \mathbb{R}$ such that the horizontal segment $s = [x(R)+w(R),x(R')] \times \{y\}$ witnesses that $R$ and $R'$ that see each other, where we assume w.l.o.g. that $x(R)+w(R)\leq x(R')$. We define the embedding $\gamma_e$ of $e$ as the union of 3 internally disjoint segments:
    \begin{itemize}[nosep]
        \item $s^1_e = [c(R),(x(R)+w(R), y))]$,
        \item $s^2_e = s$,
        \item $s^3_e = [(x(R'), y),c(R')]$.
    \end{itemize}
    It is straightforward to check that all the curves $\gamma_e$ $e\in E(G)$ are pairwise internally disjoint, hence they constitute a planar embedding of $G$.
\end{proof}

\subsection{Packing rounded rectangles}

Next, we analyze a packing within some zone $Z\subseteq \mathbb{R}^2$, with the goal of understanding when and how the rectangles of this packing can be rounded to obtain a rounded packing of substantial size. 
We fix some positive real $\ell>0$ for the rest of this subsection.

First, we need some definitions about expanding zones. 
\begin{definition}
    Let $Z\subseteq \mathbb{R}^2$. We define:
    \begin{itemize}
        \item the \emph{negatively shifted zone} $\overleftarrow{Z}\!\langle \ell \rangle =  \left(\bigcup_{(x,y) \in Z} [x-\ell, x]\times \{y\}\right) \cap [0,N_1-\ell]\times [0,N_2]$,
        \item the \emph{positively shifted zone} $\overrightarrow{Z}\!\langle \ell \rangle =  \left(\bigcup_{(x,y) \in Z} [x, x+\ell]\times \{y\}\right) \cap [0,N_1]\times [0,N_2]$,
        \item and the \emph{rounded zone} $\overleftrightarrow{Z}\!\langle \ell \rangle = \textstyle\left(\bigcup_{(x,y) \in Z} [x-\ell, x+\ell]\times \{y\}\right) \cap [0,N_1]\times [0,N_2].$
    \end{itemize} 
\end{definition}
Note that if $Z' = \overleftarrow{Z}\!\langle \ell \rangle$ then $\overleftrightarrow{Z}\!\langle \ell \rangle = \overrightarrow{Z'}\!\langle \ell \rangle$, and that the first two definitions are not symmetric.

Our main goal in this subsection is to prove the following lemma.
It intuitively says that at the cost of removing an $st$-separator in the conflict graph, one can find a packing of the rounded rectangles into a slightly extended zone. Here, an {\em{$st$-separator}} is a set of vertices (rectangles) that hits every $s$-$t$ path.

\begin{lemma}\label{lem:approx_widths}
    Let $\Q$ be a packing in a zone $Z\subseteq B$ such that every rectangle of $\Q$ has width at least $2\ell$. Further, let $C$ be an $st$-separator in the conflict graph of $\Q$.
    Then $\roundell{\Q \setminus C}$ can be packed inside the zone $\overleftrightarrow{Z}\!\langle \ell \rangle$.
\end{lemma}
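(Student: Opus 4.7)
The plan is to exploit that each rectangle of $C$ has width at least $2\ell$ as a buffer that absorbs the at most $\ell'$ width increase caused by rounding, where $\ell' := \ell^2/N_1$. Let $T$ denote the connected component of $t$ in $G - C$ and set $L := (\Q \setminus C) \setminus T$; since $C$ is an $st$-separator, no edge of $G$ joins an element of $L$ to an element of $T$. Define the DAG $D_L$ on vertex set $L$ with an arc $A \to B$ whenever $A$ and $B$ are $G$-adjacent $L$-rectangles and $A$ is strictly to the left of $B$. For each $R_L \in L$, let $d(R_L)$ be the length of the longest directed path in $D_L$ ending at $R_L$, and set $\sigma(R_L) := d(R_L) \cdot \ell'$. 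Define $\sigma$ symmetrically on $T$ using the reversed construction. The new packing places $\roundell{R_L}$ at $(x(R_L) + \sigma(R_L), y(R_L))$ and $\roundell{R_T}$ at $(x(R_T) - \sigma(R_T), y(R_T))$.

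Since every rectangle of $\Q$ has width at least $2\ell$, each horizontal slice contains at most $N_1/(2\ell)$ rectangles of $\Q$; and because along any path in $D_L$ the $x$-coordinates strictly increase by at least $2\ell$ from one vertex to the next, we get $d(\cdot) \leq N_1/(2\ell)$ and hence $\sigma \leq \ell/2$. Non-overlap is then verified slice-by-slice. Any two rectangles of $\Q \setminus C$ that are consecutive in the full slice ordering at some height $y$ must see each other in $G$, hence lie in the same component of $G - C$: they are therefore both in $L$ or both in $T$, and the DAG forces the right one's $\sigma$-value to exceed the left one's by at least $\ell'$, exactly compensating the $\leq \ell'$ widening from rounding. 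Two placed rectangles separated in the slice by at least one $C$-rectangle admit a combined inward shift of at most $\ell$, which fits inside the $\geq 2\ell$ buffer freed by $C$. Moreover, at every slice containing both an $L$- and a $T$-rectangle, at least one $C$-rectangle must lie between them in $\Q$'s slice ordering --- otherwise the ``consecutive rectangles see each other'' argument would merge distinct components of $G - C$ --- so the buffer is always present where needed.

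For containment in $\overleftrightarrow{Z}\!\langle \ell \rangle$, the key observation is that any $L$-rectangle whose right edge exceeds $N_1 - 2\ell$ would be rightmost in some slice and hence see $t$, placing it in $T$ --- a contradiction; so no $L$-rectangle is shifted past $x = N_1$, and symmetrically no $T$-rectangle past $x = 0$. Finally, every point of every placed rounded rectangle lies at horizontal distance at most $\sigma + \ell' \leq \ell$ from a point of its preimage in $Z$, so the shifted rectangle sits within the $\ell$-expansion of $Z$ intersected with $B$, which is exactly $\overleftrightarrow{Z}\!\langle \ell \rangle$. The main technical obstacle is carrying out the slice-by-slice non-overlap analysis carefully, particularly in slices where multiple $C$-rectangles and rectangles from middle components of $G - C$ interleave between an $L$- and a $T$-rectangle; the unified shift scheme above, combined with the $2\ell$ width lower bound on $C$-rectangles and the DAG refinement within each of $L$ and $T$, is designed to handle this uniformly.
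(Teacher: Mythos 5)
Your proof takes a genuinely different route from the paper's, and the core idea is sound, but there is a concrete gap in the containment argument on the $T$-side.

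The paper proves the lemma as a composition of two steps (\Cref{prop:z-} and \Cref{prop:z+}): first it removes $C$ and shifts everything on the $t$-side of the separator uniformly to the left by $\ell$, landing the packing inside $\overleftarrow{Z}\!\langle\ell\rangle\subseteq[0,N_1-\ell]\times[0,N_2]$; then it applies a uniform horizontal scaling by $1+\ell/N_1$, which inflates every rectangle's width by at least $\ell'$ (using $w\geq\ell$), so the $\ell$-rounded rectangles fit inside their scaled copies, and the whole picture moves right by at most $\ell$. Your construction instead performs a single, locally determined shift: each rectangle moves by an amount proportional to its depth in the horizontal-visibility DAG restricted to its own side of the separator, $L$-rectangles creeping rightward by $d\cdot\ell'$, $T$-rectangles creeping leftward, with the removed separator providing a $2\ell$-wide buffer wherever the two sides interleave in a slice. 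This is a valid alternative: the DAG depth is bounded by $N_1/(2\ell)$, so the shift is at most $\ell/2$, and together with the rounding slack $\ell'=\ell^2/N_1\leq\ell/2$ (valid since $N_1\geq 2\ell$) the displacement stays below $\ell$, which is what $\overleftrightarrow{Z}\!\langle\ell\rangle$ tolerates. The slice-by-slice non-overlap check you sketch also goes through: consecutive $\Q\setminus C$ rectangles in a slice lie in the same component of $G-C$ and hence the same DAG, so their $\sigma$-values differ by at least $\ell'$, exactly eating the rounding slack; whenever a $C$-rectangle separates them the $2\ell$ buffer dominates the total $\leq\ell$ of movement.

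The gap is at the right boundary of $B$. You check that no $L$-rectangle is shifted past $x=N_1$ (because $L$-rectangles have right edge $\leq N_1-2\ell$, lest they see $t$), and that no $T$-rectangle is shifted past $x=0$, but you never check that a $T$-rectangle's \emph{right} edge stays at or below $N_1$. A rightmost $T$-rectangle $R$ is a source in $D_T$, so $\sigma(R)=0$: its placed copy is $\roundell{R}$ at the original position $(x(R),y(R))$, whose right edge is $x(R)+w(\roundell{R})$, which can exceed $x(R)+w(R)$ by almost $\ell'$. If $R$ originally abuts the right side of $B$ (which is exactly the case for a $T$-rectangle that sees $t$), the rounded copy pokes out past $N_1$ and is therefore not contained in $\overleftrightarrow{Z}\!\langle\ell\rangle$. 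This is easily repaired --- for instance set $\sigma(R_T):=(d(R_T)+1)\ell'$ on the $T$-side so that $\sigma\geq\ell'$ always, or anchor the placement of $\roundell{R_T}$ by its right edge rather than its left --- and with the fix one still has $\sigma\leq\ell/2+\ell'\leq\ell$, so the rest of the argument is unaffected. But as written, the containment claim is incomplete, and that case must be handled explicitly.
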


The first step towards the proof of \Cref{lem:approx_widths} is to repack $\Q$ into the negatively shifted zone $Z$ at the cost of deleting a few rectangles.

\begin{proposition}\label{prop:z-}
    Let $\Q$ be a packing in a zone $Z\subseteq B$ such that every rectangle of $\Q$ has width at least $2\ell$. Further, let $C$ be an $st$-separator in the conflict graph of $\Q$.
    Then $\Q \setminus C$ can be packed inside the zone $\overleftarrow{Z}\!\langle \ell \rangle$.
\end{proposition}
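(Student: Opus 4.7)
The plan is to build the packing explicitly by shifting the rectangles of $R$ leftward by $\ell$ while leaving those of $L$ in place, where $L$ denotes the set of rectangles lying in the connected component of $s$ in $G-C$ (with $G$ the conflict graph of $\Q$) and $R=\Q\setminus(L\cup C)$. The preliminary observation that will drive everything is that $G$ itself has no edge between a rectangle of $L$ and a rectangle of $R$: such an edge would avoid every vertex of $C$ and therefore survive in $G-C$, contradicting the fact that $L$ and $R$ lie in distinct components of $G-C$.

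For non-overlap after the shift, the only delicate case is a pair $Q_L\in L$ and $Q_R\in R$ whose $y$-intervals share an interior point, with $Q_L$ originally strictly to the left of $Q_R$. I would pick $y$ in the common open interior of these intervals and sort by $x$-coordinate the rectangles of $\Q$ whose $y$-range contains $y$ in its interior; this gives a sequence $Q_L=P_0,P_1,\dots,P_k=Q_R$ in which consecutive $P_i,P_{i+1}$ see each other through the maximal empty horizontal sub-segment between them at height $y$, so the sequence is a path in $G$. By the preliminary observation this path must visit some $P_i\in C$, and by hypothesis $w(P_i)\geq 2\ell$; hence $Q_L$ and $Q_R$ are horizontally separated by at least $2\ell$ originally, and by at least $\ell$ after the shift. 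The complementary cases are trivial: if $Q_R$ is originally to the left of $Q_L$, the shift only increases the gap, and if the $y$-interiors are disjoint then no purely horizontal shift can create an interior overlap.

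For containment in $\overleftarrow{Z}\!\langle\ell\rangle$, the same path argument applied to the right of an $L$-rectangle shows that each $Q_L\in L$ has, at some interior height of its $y$-range, a rectangle of $C$ of width $\geq 2\ell$ to its right (otherwise $Q_L$ would see $t$, a forbidden edge); since that $C$-rectangle is contained in $B$, we obtain $x(Q_L)+w(Q_L)\leq N_1-2\ell\leq N_1-\ell$, so each point $(p,q)$ of $Q_L$ lies in $\overleftarrow{Z}\!\langle\ell\rangle$ with itself as witness. Symmetrically, every $Q_R\in R$ has its left side at $x$-coordinate at least $2\ell$ (otherwise it would see $s$), so after the shift by $(-\ell,0)$ its left side remains non-negative, and each shifted point $(p,q)$ is witnessed in $\overleftarrow{Z}\!\langle\ell\rangle$ by its pre-shift counterpart $(p+\ell,q)\in Q_R\subseteq Z$.

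The main subtlety I anticipate is cementing the ``horizontal path in $G$'' step rigorously: one has to pick $y$ in the open interior of both $y$-intervals so that both rectangles are genuinely crossed by the line, to check that consecutive rectangles along the sweep are adjacent in the conflict graph (their connecting horizontal sub-segment is empty by maximality, so they do see each other), and to deduce that any $G$-path from an $L$-vertex to an $R$-vertex must traverse a vertex of $C$. Once these combinatorial ingredients are secured, the remaining geometric checks reduce to direct coordinate computations.
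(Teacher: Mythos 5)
Your proof is correct and follows essentially the same route as the paper's: partition $\Q\setminus C$ into the $s$-side and the rest, shift the latter left by $\ell$, and use the width lower bound together with horizontal visibility to rule out overlaps and boundary violations. The only cosmetic difference is that you derive the horizontal gaps by exhibiting a conflict-graph path that must traverse a separator rectangle of width at least $2\ell$, whereas the paper argues the contrapositive (a gap smaller than $\ell$ cannot contain any rectangle of width at least $\ell$, so the two rectangles would see each other); both are sound.
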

For an illustration of the proof, see \Cref{fig:z-}.
\begin{proof}
    Let $G$ be the conflict graph of $\Q$. 
    Since $C$ is an $st$-separator in $G$, we may partition $\Q$ into three disjoint sets $X$, $C$, and $Y$ so that vertices of $X$ are not connected to $t$, vertices of $Y$ are not connected to $s$, and no vertex of $X$ is adjacent to any vertex of $Y$.
    Now, construct a new set of placed rectangles $\Q'$ by removing all rectangles of $C$ and shifting every rectangle of $Y$ by $\ell$ to the left. It remains to prove that $\Q'$ is a packing and that all rectangles of $\Q'$ are entirely contained in $\overleftarrow{Z}\!\langle \ell \rangle$.
    
    For the second assertion, we need to prove that (i) no $R'\in \Q'$ crosses the left side of the box, i.e., no $R'\in \Q'$ is such that $x(R') < 0$, and (ii) no rectangle $R'\in \Q'$ contains a point with horizontal coordinate larger than $N_2-\ell$, i.e. $x(R')+w(R')>N_2-\ell$. 
    To prove (i), suppose for the sake of a contradiction that there exists $R'\in Y$ such that $x(R')<0$. 
    We must have $R'\in Y$ because $R'$ was shifted, and hence $R'$ cannot see the left side of the box.
    Let $R\in\Q$ be $R'$ before shifting. We know that $x(R)<\ell$, therefore as every rectangle has width at least $\ell$, there is no rectangle in $\Q$ that would be placed between $R$ and the left side of the box. Therefore, the $R$ must see the left side of the box, which is a contradiction because $R\in Y$. A symmetric argument involving the right side of the box proves (ii).

    For the first assertion, we need to prove that no two rectangles in $\Q'$ overlap.
    The only case when this could a priori happen is if $R_1\in X$ and $R_2\in Y$ are overlapping after the shift.
    This would mean that $x(R_1)+w(R_1)< x(R_2)-\ell$. 
    However, again in $\Q$ there cannot be any rectangle lying in between $R_1$ and $R_2$, because every rectangle has width at least~$\ell$.
    Therefore, the $R_1$ and $R_2$ must see each other, which is a contradiction because $R_1\in X$ and~$R_2\in Y$.
\end{proof}
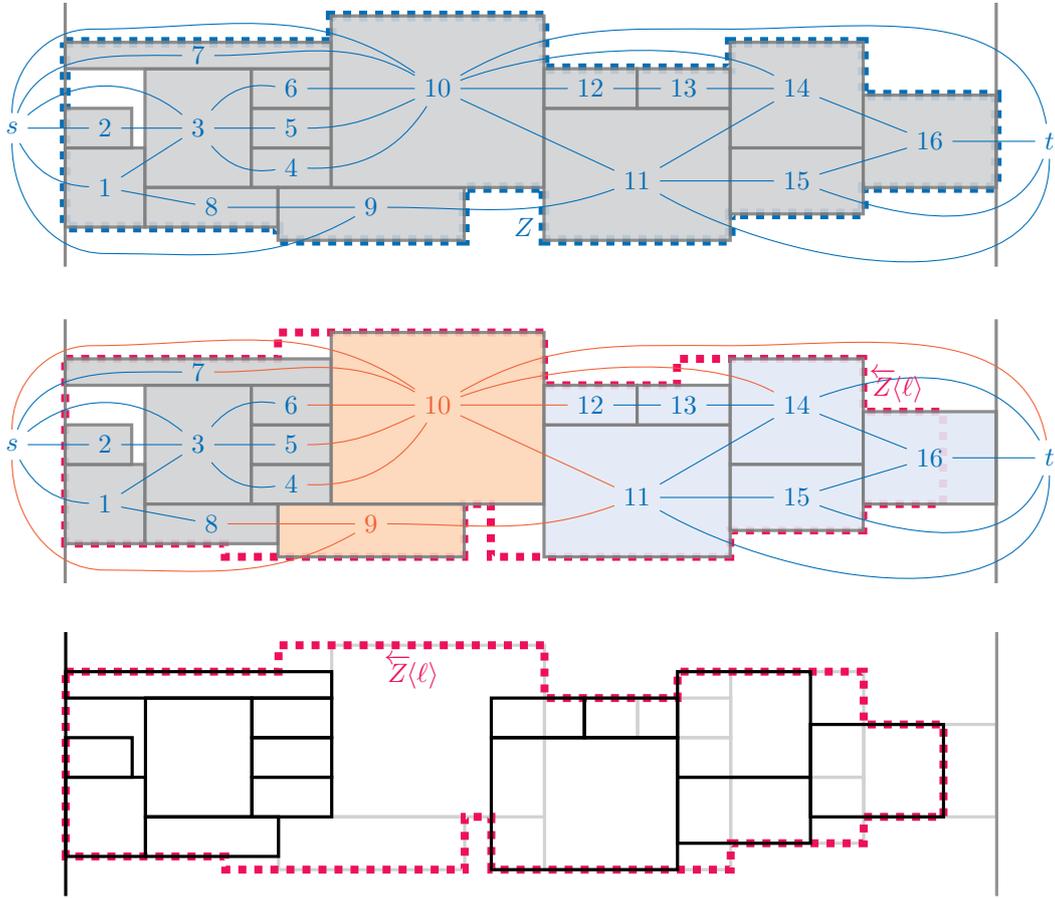
\begin{figure}[H]
    \centering
    \begin{tikzpicture}[scale=0.175,
        rect/.style={very thick,opacity=0.9,draw=black!50,fill=Gray!40},graph/.style={color=NavyBlue}]
        \draw[line width=4pt,dashed,color=NavyBlue] (0,17) -- (20,17) -- (20,19) -- (36,19) -- (36,15) -- (50,15) -- (50,17) -- (60,17) -- (60,13) -- (70,13) -- (70,6) -- (60,6) -- (60,4) -- (50,4) -- (50,2) -- (36,2) -- (36,6) -- (30,6) -- (30,2) -- (16,2) -- (16,3) -- (6,3) -- (0,3) -- cycle;

        \node at (34.5,3) {\color{NavyBlue}{$Z$}};
        
        \draw[style=rect] (0,0) -- (0,20);
        \draw[style=rect] (0,3) rectangle (6,9);
        \node[style=graph] (1) at (3,6) {1};
        \draw[style=rect] (0,9) rectangle (5,12);
        \node[style=graph] (2) at (3,10.5) {2};
        \draw[style=rect] (6,6) rectangle (14,15);
        \node[style=graph] (3) at (10,10.5) {3};
        \draw[style=rect] (14,6) rectangle (20,9);
        \node[style=graph] (4) at (17,7.5) {4};
        \draw[style=rect] (14,9) rectangle (20,12);
        \node[style=graph] (5) at (17,10.5) {5};
        \draw[style=rect] (14,12) rectangle (20,15);
        \node[style=graph] (6) at (17,13.5) {6};
        \draw[style=rect] (0,15) rectangle (20,17);
        \node[style=graph] (7) at (10,16) {7};
        \draw[style=rect] (6,6) rectangle (16,3);
        \node[style=graph] (8) at (11,4.5) {8};
        \draw[style=rect] (16,6) rectangle (30,2);
        \node[style=graph] (9) at (23,4.5) {9};
        \draw[style=rect] (20,6) rectangle (36,19);
        \node[style=graph] (10) at (28,13.5) {10};
        \draw[style=rect] (36,2) rectangle (50,12);
        \node[style=graph] (11) at (43,6.5) {11};
        \draw[style=rect] (36,12) rectangle (43,15);
        \node[style=graph] (12) at (39.5,13.5) {12};
        \draw[style=rect] (43,12) rectangle (50,15);
        \node[style=graph] (13) at (46.5,13.5) {13};
        \draw[style=rect] (50,9) rectangle (60,17);
        \node[style=graph] (14) at (55,13.5) {14};
        \draw[style=rect] (50,9) rectangle (60,4);
        \node[style=graph] (15) at (55,6.5) {15};
        \draw[style=rect] (60,6) rectangle (70,13);
        \node[style=graph] (16) at (65,9.5) {16};
        \draw[style=rect] (70,0) -- (70,20);
        \draw[style=graph] (11) to (14);
        \node[style=graph] (s) at (-4,10.5) {$s$};
        \node[style=graph] (t) at (74,9.5) {$t$};
        \draw[style=graph] (s) to[bend right] (1);
        \draw[style=graph] (s) to (2);
        \draw[style=graph] (s) to[out=70,in=180] (7);
        \draw[style=graph] (s) to[out=40,in=140] (3);
        \draw[style=graph] (s) to[out=270,in=180] (3,1) to[out=0,in=210] (9);
        \draw[style=graph] (s) to[out=90,in=180] (2,18) to[out=0,in=145] (10);
        \draw[style=graph] (2) to (3);
        \draw[style=graph] (1) to (3);
        \draw[style=graph] (1) to (8);
        \draw[style=graph] (3) to[bend left] (6);
        \draw[style=graph] (3) to (5);
        \draw[style=graph] (3) to[bend right] (4);
        \draw[style=graph] (8) to (9);
        \draw[style=graph] (7) to[out=0,in=150] (10);
        \draw[style=graph] (6) to (10);
        \draw[style=graph] (5) to[out=0,in=205] (10);
        \draw[style=graph] (4) to[bend right] (10);
        \draw[style=graph] (10) to (12);
        \draw[style=graph] (10) to (11);
        \draw[style=graph] (9) to[out=0,in=200] (11);
        \draw[style=graph] (10) to[bend left,looseness=0.7,out=20] (14);
        \draw[style=graph] (12) to (13);
        \draw[style=graph] (13) to (14);
        \draw[style=graph] (11) to (15);
        \draw[style=graph] (14) to (16);
        \draw[style=graph] (15) to (16);
        \draw[style=graph] (16) to (t);
        \draw[style=graph] (15) to[out=340,in=250] (t);
        \draw[style=graph] (11) to[out=330,in=270] (t);
        \draw[style=graph] (10) to[out=30,in=180] (50,18) to[out=0,in=100] (t);
    \end{tikzpicture}
    \begin{tikzpicture}[scale=0.175,
        rect/.style={very thick,opacity=0.9,draw=black!50,fill=Gray!40},graph/.style={color=NavyBlue},separator/.style={color=RedOrange}]
        \draw[line width=3pt,dashed,,color=OrangeRed] (0,17) -- (16,17) -- (16,19) -- (36,19) -- (36,15) -- (46,15) -- (46,17) -- (60,17) -- (60,13) -- (66,13) -- (66,6) -- (60,6) -- (60,4) -- (50,4) -- (50,2) -- (32,2) -- (32,6) -- (30,6) -- (30,2) -- (12,2) -- (12,3) -- (6,3) -- (0,3) -- cycle;

        \draw[style=rect] (0,0) -- (0,20);
        \draw[style=rect] (0,3) rectangle (6,9);
        \node[style=graph] (1) at (3,6) {1};
        \draw[style=rect] (0,9) rectangle (5,12);
        \node[style=graph] (2) at (3,10.5) {2};
        \draw[style=rect] (6,6) rectangle (14,15);
        \node[style=graph] (3) at (10,10.5) {3};
        \draw[style=rect] (14,6) rectangle (20,9);
        \node[style=graph] (4) at (17,7.5) {4};
        \draw[style=rect] (14,9) rectangle (20,12);
        \node[style=graph] (5) at (17,10.5) {5};
        \draw[style=rect] (14,12) rectangle (20,15);
        \node[style=graph] (6) at (17,13.5) {6};
        \draw[style=rect] (0,15) rectangle (20,17);
        \node[style=graph] (7) at (10,16) {7};
        \draw[style=rect] (6,6) rectangle (16,3);
        \node[style=graph] (8) at (11,4.5) {8};
        \draw[style=rect,fill=Orange!30] (16,6) rectangle (30,2);
        \node[style=separator] (9) at (23,4.5) {9};
        \draw[style=rect,fill=Orange!30] (20,6) rectangle (36,19);
        \node[style=separator] (10) at (28,13.5) {10};
        \draw[style=rect,fill=NavyBlue!10] (36,2) rectangle (50,12);
        \node[style=graph] (11) at (43,6.5) {11};
        \draw[style=rect,fill=NavyBlue!10] (36,12) rectangle (43,15);
        \node[style=graph] (12) at (39.5,13.5) {12};
        \draw[style=rect,fill=NavyBlue!10] (43,12) rectangle (50,15);
        \node[style=graph] (13) at (46.5,13.5) {13};
        \draw[style=rect,fill=NavyBlue!10] (50,9) rectangle (60,17);
        \node[style=graph] (14) at (55,13.5) {14};
        \draw[style=rect,fill=NavyBlue!10] (50,9) rectangle (60,4);
        \node[style=graph] (15) at (55,6.5) {15};
        \draw[style=rect,fill=NavyBlue!10] (60,6) rectangle (70,13);
        \node[style=graph] (16) at (65,9.5) {16};
        \draw[style=rect] (70,0) -- (70,20);
        \draw[style=graph] (11) to (14);
        \node[style=graph] (s) at (-4,10.5) {$s$};
        \node[style=graph] (t) at (74,9.5) {$t$};
        \draw[style=graph] (s) to[bend right] (1);
        \draw[style=graph] (s) to (2);
        \draw[style=graph] (s) to[out=70,in=180] (7);
        \draw[style=graph] (s) to[out=40,in=140] (3);
        \draw[style=separator] (s) to[out=270,in=180] (3,1) to[out=0,in=210] (9);
        \draw[style=separator] (s) to[out=90,in=180] (2,18) to[out=0,in=145] (10);
        \draw[style=graph] (2) to (3);
        \draw[style=graph] (1) to (3);
        \draw[style=graph] (1) to (8);
        \draw[style=graph] (3) to[bend left] (6);
        \draw[style=graph] (3) to (5);
        \draw[style=graph] (3) to[bend right] (4);
        \draw[style=separator] (8) to (9);
        \draw[style=separator] (7) to[out=0,in=150] (10);
        \draw[style=separator] (6) to (10);
        \draw[style=separator] (5) to[out=0,in=205] (10);
        \draw[style=separator] (4) to[bend right] (10);
        \draw[style=separator] (10) to (12);
        \draw[style=separator] (10) to (11);
        \draw[style=separator] (9) to[out=0,in=200] (11);
        \draw[style=separator] (10) to[bend left,looseness=0.7,out=20] (14);
        \draw[style=graph] (12) to (13);
        \draw[style=graph] (13) to (14);
        \draw[style=graph] (11) to (15);
        \draw[style=graph] (14) to (16);
        \draw[style=graph] (15) to (16);
        \draw[style=graph] (16) to (t);
        \draw[style=graph] (15) to[out=340,in=250] (t);
        \draw[style=graph] (14) to[out=20,in=120] (t);
        \draw[style=graph] (11) to[out=330,in=270] (t);
        \draw[style=separator] (10) to[out=30,in=180] (50,18) to[out=0,in=100] (t);

        \node at (62.5,15) {\color{OrangeRed}{$\overleftarrow{Z}\!\langle \ell \rangle$}};

    \end{tikzpicture}
    \begin{tikzpicture}[scale=0.175]        
        \draw[very thick,color=Gray!40] (16,6) rectangle (30,2);
        \draw[very thick,color=Gray!40] (20,6) rectangle (36,19);
        \draw[very thick,color=Gray!40] (36,2) rectangle (50,12);
        \draw[very thick,color=Gray!40] (36,12) rectangle (43,15);
        \draw[very thick,color=Gray!40] (43,12) rectangle (50,15);
        \draw[very thick,color=Gray!40] (50,9) rectangle (60,17);
        \draw[very thick,color=Gray!40] (50,9) rectangle (60,4);
        \draw[very thick,color=Gray!40] (60,6) rectangle (70,13);
        
        \draw[line width=3pt,dashed,,color=OrangeRed] (0,17) -- (16,17) -- (16,19) -- (36,19) -- (36,15) -- (46,15) -- (46,17) -- (60,17) -- (60,13) -- (66,13) -- (66,6) -- (60,6) -- (60,4) -- (50,4) -- (50,2) -- (32,2) -- (32,6) -- (30,6) -- (30,2) -- (12,2) -- (12,3) -- (6,3) -- (0,3) -- cycle;

        \node at (26,17) {\color{OrangeRed}{$\overleftarrow{Z}\!\langle \ell \rangle$}};
        
        \draw[very thick] (0,0) -- (0,20);
        \draw[very thick] (0,3) rectangle (6,9);
        \draw[very thick] (0,9) rectangle (5,12);
        \draw[very thick] (6,6) rectangle (14,15);
        \draw[very thick] (14,6) rectangle (20,9);
        \draw[very thick] (14,9) rectangle (20,12);
        \draw[very thick] (14,12) rectangle (20,15);
        \draw[very thick] (0,15) rectangle (20,17);
        \draw[very thick] (6,6) rectangle (16,3);
        \draw[very thick,color=Gray] (70,0) -- (70,20);
        
        \draw[very thick] (32,2) rectangle (46,12);
        \draw[very thick] (32,12) rectangle (39,15);
        \draw[very thick] (39,12) rectangle (46,15);
        \draw[very thick] (46,9) rectangle (56,17);
        \draw[very thick] (46,9) rectangle (56,4);
        \draw[very thick,] (56,6) rectangle (66,13);

    \end{tikzpicture}
    \caption{Illustration of the proof of \Cref{prop:z-}. From top to bottom:
    First, a blue conflict graph of the gray packing. The packing is entirely in $Z$, delimited by the blue border.
    Secondly, we remove the orange separator $C$ and want to pack the leftover rectangles inside the red region $\protect\overleftarrow{Z}\!\langle \ell \rangle$. The set of rectangles referred to as $Y$ in the proof is in light blue, and $X$ is left gray.
    Finally, we pack in $\protect\overleftarrow{Z}\!\langle \ell \rangle$ by shifting the rectangles at the right of the separator by $\ell$ to the left.}
    \label{fig:z-}
\end{figure}

Now that we have emptied a strip to the right of the zone, we can do some resource augmentation in order to replace the original rectangles by their rounded versions, while still being able to pack them inside the rounded zone.

\begin{proposition}\label{prop:z+}
Let $\Q$ be a packing in a zone $Z\subseteq [0,N_1-\ell]\times [0,N_2]$ such that every rectangle of $\Q$ has width at least $2\ell$. 
    Then $\roundell{\Q}$ can be packed inside the zone $\overrightarrow{Z}\!\langle \ell \rangle$.
\end{proposition}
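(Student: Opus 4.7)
The plan is to place each rounded rectangle $\roundell{R}$ at $(x(R) + \sigma(R), y(R))$ for a carefully chosen nonnegative rightward shift $\sigma(R)$, keeping vertical coordinates unchanged. Define a binary relation $R \prec R'$ on $\Q$ to hold when the vertical extents of $R$ and $R'$ overlap and $x(R) + w(R) \leq x(R')$; this captures that $R'$ lies entirely to the right of $R$ along some horizontal line they share. For each $R \in \Q$, let $d(R)$ be the length of the longest chain $R_1 \prec R_2 \prec \cdots \prec R_k = R$ in $\Q$, and set $\sigma(R) := (d(R)-1)\cdot \ell'$, where $\ell' = \ell^2/N_1$.

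Two observations drive the argument. First, since each step of a $\prec$-chain advances the $x$-coordinate by at least $2\ell$ (the minimum width of any rectangle in $\Q$) and the last rectangle still lies in $B$, one has $d(R) \leq N_1/(2\ell)$, so $\sigma(R) \leq \ell/2 - \ell'$. Second, two distinct shifted rounded rectangles $\roundell{R}$ and $\roundell{R'}$ can potentially overlap only if the $y$-ranges of $R$ and $R'$ overlap; validity of the original packing then forces $R \prec R'$ up to swapping. In that case $\sigma(R') \geq \sigma(R) + \ell'$, and combining $\roundell{w(R)} \leq w(R) + \ell'$ with $x(R') \geq x(R) + w(R)$ gives
$$x(R) + \sigma(R) + \roundell{w(R)} \leq x(R) + w(R) + \sigma(R) + \ell' \leq x(R') + \sigma(R'),$$
so the right edge of shifted $\roundell{R}$ does not exceed the left edge of shifted $\roundell{R'}$, ruling out overlap.

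Finally, one verifies that each shifted rounded rectangle fits inside $\overrightarrow{Z}\!\langle \ell \rangle$. For a horizontal line $y$ crossing $R$, the slice $[x(R), x(R) + w(R)] \times \{y\}$ lies in $Z$, and since $Z \subseteq [0,N_1-\ell]\times[0,N_2]$, the elongated slice $[x(R), x(R) + w(R) + \ell] \times \{y\}$ lies inside $\overrightarrow{Z}\!\langle \ell \rangle$ by definition. The shifted rounded slice at height $y$ has right endpoint at most $x(R) + \sigma(R) + w(R) + \ell' \leq x(R) + w(R) + \ell/2 < x(R) + w(R) + \ell$, so it is contained in that elongated slice. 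The main subtlety to handle cleanly is the non-overlap argument, since $\prec$ is not transitive once one leaves a common horizontal line; defining the shift via the global chain-depth $d(R)$ sidesteps this by providing enough slack at every single $\prec$-link, so only the direct pair $R \prec R'$ needs to be checked.
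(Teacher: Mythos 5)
Your proof is correct, but it takes a genuinely different route from the paper's. The paper applies a single global horizontal scaling by $\lambda = 1 + \ell/N_1$ to all the placed rectangles at once: for $R = [x,x+w]\times[y,y+h]$ one sets $R^\times = [\lambda x,\lambda(x+w)]\times[y,y+h]$, observes that $\lambda w = w + w\ell/N_1 \geq w + \ell^2/N_1 \geq \ell'\lceil w/\ell'\rceil$ (so the rounded rectangle fits inside $R^\times$), and that the maximal rightward displacement of any point is $N_1\cdot\ell/N_1 = \ell$ (so everything stays inside $\overrightarrow{Z}\langle\ell\rangle$). Because scaling is a bijection of the plane, disjointness of the $R^\times$ is automatic and requires no pairwise check. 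You instead assign to each rectangle a combinatorial depth $d(R)$ given by the longest left-to-right $\prec$-chain ending at $R$, translate $R$ by $(d(R)-1)\ell'$, and then argue non-overlap separately: any two rectangles whose $y$-ranges meet in the open sense must be $\prec$-comparable (by validity of the original packing), hence their shifts differ by at least $\ell'$, exactly enough to absorb the at most $\ell'$ of extra width introduced by rounding. The paper's scaling argument is more compact and sidesteps pairwise reasoning entirely; your chain-depth argument is more discrete, makes the source of slack explicit (one unit of $\ell'$ per $\prec$-link), correctly handles the non-transitivity of $\prec$ by defining the shift through the global depth rather than through individual links, and in fact yields a slightly tighter total-displacement bound ($\leq\ell/2$ rather than $\leq\ell$), although the lemma does not need this extra room.
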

For clarification, see \Cref{fig:z+}. 
\begin{proof}
    First, scale horizontally every rectangle in $\Q$ by a factor $\lambda = 1+\ell/N_1$, i.e., for a rectangle $R=[x,x+w]\times [y,y+h]$ we define the rectangle $R^\times = [\lambda x,\lambda(x+w)]\times [y,y+h]$.
    These rectangles fit inside $\roundell{Z}$. Indeed, the maximum possible displacement of a point is $N_1\cdot \ell/N_1 =  \ell$, i.e. the image of a point under scaling is at horizontal distance at most $\ell$ to the right of the original point. Next, observe that every rectangle $\roundell{R}$ can be entirely placed inside the corresponding rectangle $R^\times$, because $$\lambda w = w + w \ell/N_1 \geq w + \ell^2/N_1 = \ell'+ w = \ell'(1+w/\ell') \geq \ell'\ceil{w/\ell'}.$$ (Recall here that we assumed all rectangles to have width at least $\ell$.)
    Now, $\Q'$ can be obtained from $\Q$ by replacing each $R\in \Q'$ with $R^{\times}$, fitting $\roundell{R}$ inside $R^{\times}$, and finally shifting all rectangles to the left so that they have integer coordinates.
    The last step is always possible as every rectangle has integer length.
\end{proof}
\begin{figure}[H]
    \centering
    \begin{tikzpicture}[scale=0.175,
            rect/.style={very thick,draw=black!50,fill=Gray!40}
        ]
        \pgfdeclarepatternformonly{swnestripes}{\pgfpoint{0cm}{0cm}}{\pgfpoint{1cm}{1cm}}{\pgfpoint{1cm}{1cm}}
        {
            \foreach \i in {0.1cm, 0.3cm,...,0.9cm}
            {
             \pgfpathmoveto{\pgfpoint{\i}{0cm}}
             \pgfpathlineto{\pgfpoint{1cm}{1cm - \i}}
             \pgfpathlineto{\pgfpoint{1cm}{1cm - \i + 0.1cm}}
             \pgfpathlineto{\pgfpoint{\i - 0.1cm}{0cm}}
             \pgfpathclose%
             \pgfusepath{fill}
             \pgfpathmoveto{\pgfpoint{0cm}{\i}}
             \pgfpathlineto{\pgfpoint{1cm - \i}{1cm}}
             \pgfpathlineto{\pgfpoint{1cm - \i - 0.1cm}{1cm}}
             \pgfpathlineto{\pgfpoint{0cm}{\i + 0.1cm}}
             \pgfpathclose%
             \pgfusepath{fill}
            }
        }
        \pattern[pattern=swnestripes, pattern color=RedOrange!50] (66,0) rectangle (70,20);

        \draw[line width=3pt,dashed,color=NavyBlue] (0,17) -- (20,17) -- (20,15) -- (36,15) -- (46,15) -- (46,17) -- (56,17) -- (56,13) -- (66,13) -- (66,6) -- (56,6) -- (56,4) -- (46,4) -- (46,2) -- (32,2) -- (32,3) -- (0,3) -- cycle;

        \node at (26,9) {\color{NavyBlue} $Z$};
        
        \draw[style=rect] (0,0) -- (0,20);
        \draw[style=rect] (0,3) rectangle (6,9);
        \draw[style=rect] (0,9) rectangle (5,12);
        \draw[style=rect] (6,6) rectangle (14,15);
        \draw[style=rect] (14,6) rectangle (20,9);
        \draw[style=rect] (14,9) rectangle (20,12);
        \draw[style=rect] (14,12) rectangle (20,15);
        \draw[style=rect] (0,15) rectangle (20,17);
        \draw[style=rect] (6,6) rectangle (16,3);
        \draw[style=rect] (70,0) -- (70,20);
        
        \draw[style=rect] (32,2) rectangle (46,12) node[pos=0.5] {$R$};
        \draw[style=rect] (32,12) rectangle (39,15);
        \draw[style=rect] (39,12) rectangle (46,15);
        \draw[style=rect] (46,9) rectangle (56,17);
        \draw[style=rect] (46,9) rectangle (56,4);
        \draw[style=rect] (56,6) rectangle (66,13);
    \end{tikzpicture}
    \begin{tikzpicture}[scale=0.175,
            rect/.style={very thick,draw=black!50,fill=Gray!40}
        ]
        \draw[line width=3pt,dashed,color=OrangeRed] (0,17) -- (24,17) -- (24,15) -- (36,15) -- (46,15) -- (46,17) -- (60,17) -- (60,13) -- (70,13) -- (70,6) -- (60,6) -- (60,4) -- (50,4) -- (50,2) -- (32,2) -- (32,3) -- (0,3) -- cycle;

        \node at (28,9) {\color{OrangeRed} $\overrightarrow{Z}\!\langle \ell \rangle$};

        \newcommand{\factor}{1.05714285714}; 
        
        \draw[style=rect] (0,0) -- (0,20);
        \draw[style=rect] (0*\factor,3) rectangle (6*\factor,9);
        \draw[style=rect] (0*\factor,9) rectangle (5*\factor,12);
        \draw[style=rect] (6*\factor,6) rectangle (14*\factor,15);
        \draw[style=rect] (14*\factor,6) rectangle (20*\factor,9);
        \draw[style=rect] (14*\factor,9) rectangle (20*\factor,12);
        \draw[style=rect] (14*\factor,12) rectangle (20*\factor,15);
        \draw[style=rect] (0*\factor,15) rectangle (20*\factor,17);
        \draw[style=rect] (6*\factor,6) rectangle (16*\factor,3);
        \draw[style=rect] (70,0) -- (70,20);
        
        \draw[style=rect] (32*\factor,2) rectangle (46*\factor,12) node[pos=0.5] {$R^\times$};
        \draw[style=rect] (32*\factor,12) rectangle (39*\factor,15);
        \draw[style=rect] (39*\factor,12) rectangle (46*\factor,15);
        \draw[style=rect] (46*\factor,9) rectangle (56*\factor,17);
        \draw[style=rect] (46*\factor,9) rectangle (56*\factor,4);
        \draw[style=rect] (56*\factor,6) rectangle (66*\factor,13);

        \draw [thick, -{Latex[round]}] (62,3) -- (68,3) node[midway,below] {\footnotesize$\times (1+\ell/N_1)$};
    \end{tikzpicture}
    \begin{tikzpicture}[scale=0.175,
            rect/.style={very thick,draw=black!50,fill=Gray!40}
        ]
        \draw[line width=3pt,dashed,color=OrangeRed] (0,17) -- (24,17) -- (24,15) -- (36,15) -- (46,15) -- (46,17) -- (60,17) -- (60,13) -- (70,13) -- (70,6) -- (60,6) -- (60,4) -- (50,4) -- (50,2) -- (32,2) -- (32,3) -- (0,3) -- cycle;

        \newcommand{\factor}{1.05714285714}; 
        \newcommand{\reducedfactor}{0.05714285714}; 

        \node at (28,9) {\color{OrangeRed} $\overrightarrow{Z}\!\langle \ell \rangle$};
        
        \draw[style=rect] (0,0) -- (0,20);
        \draw[style=rect] (0*\factor,3) rectangle (6+0*\reducedfactor,9);
        \draw[style=rect] (0*\factor,9) rectangle (5+0*\reducedfactor,12);
        \draw[style=rect] (6*\factor,6) rectangle (14+6*\reducedfactor,15);
        \draw[style=rect] (14*\factor,6) rectangle (20+14*\reducedfactor,9);
        \draw[style=rect] (14*\factor,9) rectangle (20+14*\reducedfactor,12);
        \draw[style=rect] (14*\factor,12) rectangle (20+14*\reducedfactor,15);
        \draw[style=rect] (0*\factor,15) rectangle (20+0*\reducedfactor,17);
        \draw[style=rect] (6*\factor,6) rectangle (16+6*\reducedfactor,3);
        \draw[style=rect] (70,0) -- (70,20);
        
        \draw[style=rect] (32*\factor,2) rectangle (46+32*\reducedfactor,12) node[pos=0.5] {$\roundell{R}$};
        \draw[style=rect] (32*\factor,12) rectangle (39+32*\reducedfactor,15);
        \draw[style=rect] (39*\factor,12) rectangle (46+39*\reducedfactor,15);
        \draw[style=rect] (46*\factor,9) rectangle (56+46*\reducedfactor,17);
        \draw[style=rect] (46*\factor,9) rectangle (56+46*\reducedfactor,4);
        \draw[style=rect] (56*\factor,6) rectangle (66+56*\reducedfactor,13);
    \end{tikzpicture}
    \caption{Illustration of the proof of Proposition \ref{prop:z+}. From top to bottom: First, $\Q$ is packed into $Z$ (blue zone). Then, the rectangles in $\Q$ are scaled horizontally by a factor $\lambda=1+\ell/N_1$. We argue in the proof that these scaled-up rectangles are packed in $\protect\overrightarrow{Z}\!\langle \ell \rangle$ (red zone). Finally, we replace each scaled-up rectangle by its rounded version, which has smaller width.  }
    \label{fig:z+}
\end{figure}
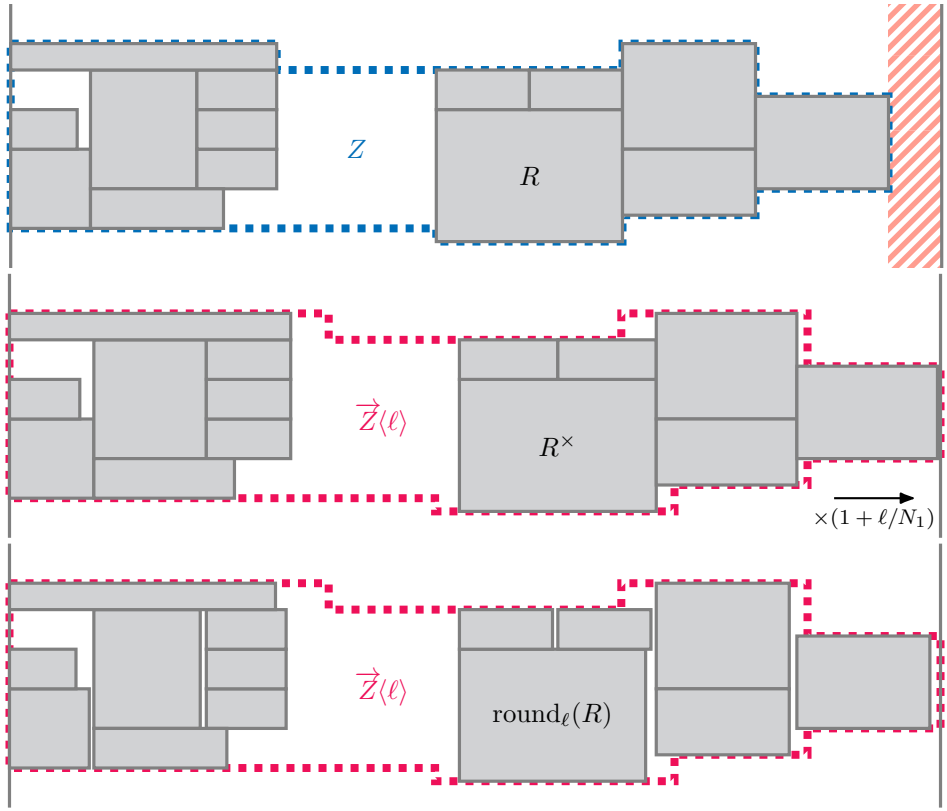

We may now combine \Cref{prop:z-} and \Cref{prop:z+} to achieve our goal.

\begin{proof}[Proof of \Cref{lem:approx_widths}]
    Apply  \Cref{prop:z-} and \Cref{prop:z+} to get that $\roundell{\Q\setminus C}$ can be packed in $\overrightarrow{Z'}\!\langle \ell \rangle$, where $Z'= \overleftarrow{Z}\!\langle \ell \rangle$.
    As $\overrightarrow{Z'}\!\langle \ell \rangle = \overleftrightarrow{Z}\!\langle \ell \rangle$, the proof is finished.
\end{proof}
\subsection{Proof of the \texorpdfstring{\hyperref[lem:structural]{Structural Lemma}}{structural lemma}}

Finally, in this subsection we define the polylines that we are interested in and prove some results about zones and polylines to finish the proof of \Cref{lem:structural}.
The main idea is to construct some well-chosen polylines by looking at the rectangles on short $s$-$t$ paths.
These polylines are then used to delimit zones in which we can find a separator of bounded size, and apply the ideas of the previous subsections.

Recall that we are working with a packing $\S$ of size $k$ consisting of rectangles of width at least $2\ell$ each. Let $G$ be the conflict graph of $\S$. For every $R\in \S$, by $v_R$ we denote the vertex of $G$ corresponding to $R$.
First, we need to understand how $s$-$t$ paths in $G$ can be mapped to polylines.

\begin{definition}[Bottom polyline of a path]
    Consider an $s$-$t$ path $P = (s, v_{R_1}, v_{R_2}, \dots, v_{R_m}, t)$ in $G$, and suppose that for each $i\in \{0,1,\ldots,m\}$, that $R_i$ and $R_{i+1}$ see each other is witnessed by the segment $s_i=[x(R_i)+w(R_i),x(R_{i+1})] \times \{y_i\}$ (where $R_0=s$ and $R_{m+1}=t$). Then define the \emph{bottom polyline} of $P$ as the polyline $\P$ formed by the union of the following segments:
    \begin{itemize}[nosep]
        \item $[x(R_i), x(R_i)+w(R_i)]\times\{y(R_i)\}$ for each $i\in [m]$,
        \item $\{x(R_i)\} \times [\min\{y(R_i),y_{i-1}\},\max\{y(R_i),y_{i-1}\}]$ for each $i\in [m]$,
        \item $\{x(R_i)+w(R_i)\} \times [\min\{y(R_{i}),y_i\},\max\{y(R_{i}),y_i\}]$ for each $i\in [m]$, and
        \item $s_i$ for each $i\in \{0,1,\ldots,m\}$.
    \end{itemize}
    Less formally, $\P$ is the union of the segments $s_i$ joining the rectangles of the path, the bottom sides of the rectangles, and parts of the left/right sides of the rectangles to join the segments to the bottom sides. 
\end{definition}
Similarly, we define the notion of the top polyline of an $s$-$t$ path in $G$. When defining at the same time the top and the bottom polyline of the same path, we always use the same segments $s_i$ to define how rectangles $R_i$ and $R_{i+1}$ should be linked.
Finally, we will also need the middle polyline.

\begin{definition}[Middle polyline of a path]
    Consider an $s$-$t$ path $P = (s, v_{R_1}, v_{R_2}, \dots, v_{R_m}, t)$ in $G$, and suppose that for each $i\in \{0,1,\ldots,m\}$, that $R_i$ and $R_{i+1}$ see each other is witnessed by the segment $s_i=[x(R_i)+w(R_i),x(R_{i+1})] \times \{y_i\}$ (where $R_0=s$ and $R_{m+1}=t$). Then define the \emph{middle polyline} of $P$ as the polyline $\P$ formed by the union of the following segments:
    \begin{itemize}[nosep]
        \item $\{x(R_i)+w(R_i)/2\} \times [\min\{y(R_i)+h(R_i)/2,y_i\},\max\{y(R_i)+h(R_i)/2,y_i\}]$ for each $i\in [m]$,
        \item $\{x(R_i)+w(R_i)/2\} \times [\min\{y(R_i)+h(R_i)/2,y_{i+1}\},\max\{y(R_i)+h(R_i)/2,y_{i+1}\}]$ for each $i\in [m]$, and
        \item $[\max(x(R_i)+w(R_i)/2,0),\min(x(R_{i+1})+w(R_{i+1})/2,N_2)] \times \{y_i\}$ for each $i\in \{0,1,\ldots,m\}$.
    \end{itemize}
    Less formally, $\P$ is the union of:
    \begin{itemize}[nosep]
        \item a vertical segment from the center of each $R_i$ to the vertical position of $s_i$,
        \item a vertical segment from the center of each $R_i$ to the vertical position of $s_{i+1}$,
        \item all segments $s_i$ extended so that they reach the horizontal coordinates of the centers of the corresponding rectangles.
    \end{itemize}
\end{definition}
For a visual representation, see \Cref{fig:polylines}. 
\begin{figure}[H]
    \centering

    \begin{tikzpicture}[scale=0.2]
        \draw[line width=2pt,color=Orange,draw opacity=0.7] (0,10.2) -- (4.8,10.2) -- (4.8,14.2) -- (11.2,14.2) -- (11.2,10.2) -- (19.8,10.2) -- (19.8,18.2) -- (30.2,18.2) -- (30.2,12.2) -- (37.8,12.2) -- (37.8,16.2) -- (46.2,16.2) -- (46.2,14.2) -- (54.2,14.2) -- (54.2,10.2) -- (60,10.2) ;

        \draw[line width=2pt,color=NavyBlue,draw opacity=0.7] (0,9.8) -- (4.8,9.8) -- (4.8,5.8) -- (11.2,5.8) -- (11.2,9.8) -- (19.8,9.8) -- (19.8,5.8) -- (29.8,5.8) -- (29.8,1.8) -- (38.2,1.8) -- (38.2,8.8) -- (45.8,8.8) -- (45.8,5.8) -- (54.2,5.8) -- (54.2,9.8) -- (60,9.8) ;

        \draw[very thick] (0,0) -- (0,20);
        \draw[very thick] (0,10) -- (5,10);
        
        \node at (8,15.5) {\color{Orange}{\textbf{top}}};
        \draw[very thick] (5,6) rectangle (11,14);
        \node at (6.5,7.5) {$R_1$};
        \draw[dashed] (8,6) -- (8,14);
        \draw[dashed] (5,10) -- (11,10);
        
        \node at (15.5,11.5) {\color{ForestGreen}{\textbf{middle}}};
        \draw[very thick] (11,10) -- (20,10);

        \node at (25,4.5) {\color{NavyBlue}{\textbf{bottom}}};
        \draw[very thick] (20,6) rectangle (30,18);
        \node at (23.5,14) {$R_2$};
        \draw[dashed] (25,6) -- (25,18);
        \draw[dashed] (20,12) -- (30,12);
        
        \draw[very thick] (30,2) rectangle (38,12);
        \node at (32.5,5) {$R_3$};
        \draw[dashed] (34,2) -- (34,12);
        \draw[dashed] (30,7) -- (38,7);
        
        \draw[very thick] (38,9) rectangle (46,16);
        \node at (40.5,14) {$R_4$};
        \draw[dashed] (42,9) -- (42,16);
        \draw[dashed] (38,12.5) -- (46,12.5);
        
        \draw[very thick] (46,6) rectangle (54,14);
        \node at (48.5,8) {$R_5$};
        \draw[dashed] (50,6) -- (50,14);
        \draw[dashed] (46,10) -- (54,10);
        
        \draw[very thick] (54,10) -- (60,10);
        \draw[very thick] (60,0) -- (60,20);

        \draw[line width=2pt,color=ForestGreen,draw opacity=0.7] (0,10) -- (25,10) -- (25,12) -- (25,7) -- (34,7) -- (34,10) -- (42,10) -- (42,12.5) -- (42, 11) -- (50,11) -- (50,10) -- (60,10) ;

    \end{tikzpicture}
    \caption{In orange, the top polyline of the $st$-path formed by $R_1,R_2,R_3,R_4$ and $R_5$. In blue, its bottom polyline, and in green, its middle polyline. The dashed lines split their respective rectangles into 4 equal parts.}
    \label{fig:polylines}
\end{figure}
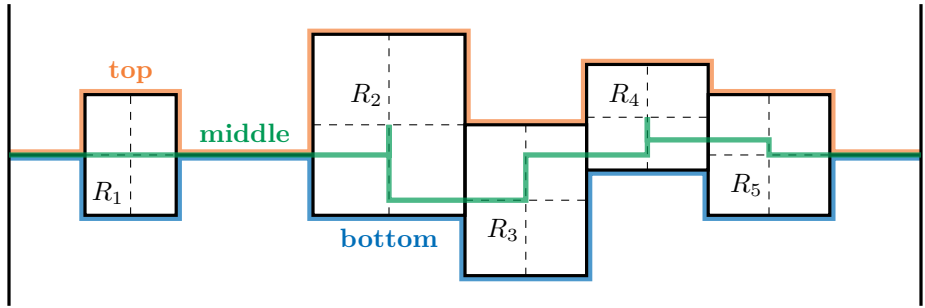

The following is clear.
\begin{proposition}\label{prop:polyline_complexity}
    The top, bottom and middle polylines of a path $P$ have complexity at most $4|P|+1$, where $|P|$ denotes the number of vertices on $P$.
\end{proposition}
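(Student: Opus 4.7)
The proof is purely a counting exercise: for each of the three polylines we inventory the constituent segments using the formal definitions, then use the assumption $|P|=m+2$ to convert the count into the advertised bound.

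First I will handle the \textbf{bottom polyline} (the top polyline being symmetric). Reading the four bullets in the definition, the segments of $\P$ are, per index:
\begin{itemize}[nosep]
    \item the $m$ bottom sides $[x(R_i),x(R_i)+w(R_i)]\times\{y(R_i)\}$ for $i\in[m]$;
    \item the $m$ left-side vertical segments $\{x(R_i)\}\times[\min\{y(R_i),y_{i-1}\},\max\{y(R_i),y_{i-1}\}]$ for $i\in[m]$;
    \item the $m$ right-side vertical segments $\{x(R_i)+w(R_i)\}\times[\min\{y(R_i),y_i\},\max\{y(R_i),y_i\}]$ for $i\in[m]$;
    \item the $m+1$ connecting horizontal segments $s_i$ for $i\in\{0,1,\ldots,m\}$.
\end{itemize}
Summing gives at most $m+m+m+(m+1)=4m+1$ segments. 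Since $|P|=m+2$, this is bounded by $4m+1\le 4|P|+1$. The top polyline is built by the symmetric construction using top sides instead of bottom sides, and the same count applies.

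Second, for the \textbf{middle polyline}, I follow the three bullets in the definition: for each $i\in[m]$ we contribute two vertical segments at abscissa $x(R_i)+w(R_i)/2$ (one joining the center of $R_i$ to the height of $s_{i-1}$, one joining it to the height of $s_i$), and for each $i\in\{0,1,\ldots,m\}$ we contribute one extended horizontal segment from $s_i$. That totals $2m+(m+1)=3m+1$ segments, which is again at most $4|P|+1$.

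Finally, I should briefly verify that the listed segments do form a polyline in the sense of the definition, i.e.\ that consecutive segments share endpoints: this follows immediately from how $y_i$ was chosen, since the endpoints of each $s_i$ lie on the right side of $R_i$ and the left side of $R_{i+1}$ at the same height, matching the vertical side-segments attached to the adjacent rectangles (and at the centers for the middle polyline). No step is technically delicate; the only thing to be careful about is the bookkeeping around the boundary cases $i=0$ and $i=m$, where $R_0=s$ and $R_{m+1}=t$ are the left and right sides of $B$ and contribute no interior segments, only the segments $s_0$ and $s_m$ already counted.
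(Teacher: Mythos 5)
Your count is correct and is exactly the verification the paper leaves implicit (the proposition is stated as "clear" with no proof): $m+m+m+(m+1)=4m+1$ segments for the top/bottom polylines and $3m+1$ for the middle one. One small remark: the tight tally $4m+1$ is what is actually needed downstream (the structural lemma requires complexity at most $4/\eps+1$ for paths with $m\le 1/\eps$ \emph{internal} vertices), so the intended reading of $|P|$ is closer to $m$ than to your $m+2$; under either reading your bound holds, so this is only a matter of bookkeeping.
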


Moreover, the middle polyline is defined so that we have space to the left and the right when performing resource augmentation. 
This will be made clear in the following definitions and lemmas; see 
\Cref{fig:crossing_polyline}.
\begin{figure}[H]
    \centering

    \begin{tikzpicture}[scale=0.2,
            rect/.style={very thick,draw=black!50,fill=Gray!40},
            old_orange_rect/.style={very thick,draw=Orange!20},
            orange_rect/.style={very thick,draw=Orange},
            line/.style={very thick,draw=black!50},
            graph/.style={circle,fill=NavyBlue,inner sep=2pt},
            edge/.style={draw=NavyBlue}
        ]
    
        \node (s) at (-2,12) {\color{NavyBlue}$s$};
    
        \draw[style=line] (0,-7) -- (0,20);
        \draw[style=rect] (0,10) -- (3,10);
        
        \draw[style=rect] (3,6) rectangle (11,14);
        \node[style=graph] (1) at (7,12) {};
        
        \draw[style=old_orange_rect] (11.2,9.8) rectangle (20,-5);
        \draw[style=orange_rect] (7.2,9.8) rectangle (16,-5);
        \node at (5,2) {\color{Orange}{$\overleftarrow{\Q}\!\langle \ell \rangle$}};

        \draw[style=old_orange_rect] (20,1.8) rectangle (38.2,-5);
        \draw[style=orange_rect] (16,1.8) rectangle (34.2,-5);
        \node at (29.1,-1.6) {\color{Orange!40}{$\Q$}};
        
        \draw[style=rect] (11,10) -- (20,10);
        \node at (18,11.5) {\color{ForestGreen}{$\P$}};
        
        \draw[style=rect] (20,6) rectangle (30,18);
        \node[style=graph] (2) at (25,15) {};
        
        \draw[style=rect] (30,2) rectangle (38,12);
        \node[style=graph] (3) at (34,11) {};
        
        \draw[style=rect] (38,9) rectangle (46,16);
        \node[style=graph] (4) at (42,14.25) {};
        
        \draw[style=rect] (46,6) rectangle (54,14);
        \node[style=graph] (5) at (50,12.5) {};
        
        \draw[style=rect] (54,10) -- (60,10);
        \draw[style=line] (60,-7) -- (60,20);

        \node (t) at (62,12.5) {\color{NavyBlue}$t$};
        
        \draw[style=old_orange_rect] (38.2,8.8) rectangle (45.8,-5);
        \draw[style=orange_rect] (34.2,8.8) rectangle (42,-5);
        
        \draw[style=old_orange_rect] (45.8,5.8) rectangle (59.8,-3);
        \draw[style=orange_rect] (42,5.8) rectangle (56,-3);
        
        \draw[style=old_orange_rect] (54.2,9) rectangle (59.8,6);
        \draw[style=orange_rect] (50.2,9) rectangle (56,6);

        \draw[style=edge] (s) to[out=0,in=180] (1);
        \draw[style=edge] (1) to[out=0,in=180] (2);
        \draw[style=edge] (2) to[out=0,in=180] (3);
        \draw[style=edge] (3) to[out=0,in=180] (4);
        \draw[style=edge] (4) to[out=0,in=180] (5);
        \draw[style=edge] (5) to[out=0,in=180] (t);

        \node at (15.5,15) {\color{NavyBlue}{$P$}};

        \draw[line width=2pt,color=ForestGreen,draw opacity=0.7] (0,10) -- (25,10) -- (25,12) -- (25,7) -- (34,7) -- (34,10) -- (42,10) -- (42,12.5) -- (42, 11) -- (50,11) -- (50,10) -- (60,10) ;

    \end{tikzpicture}
    \caption{In green, the middle polyline $\P$ of the blue $st$-path $P$ constituted of the gray rectangles. In light orange, $\Q$, and in orange, $\protect\overleftarrow{\Q}\!\langle \ell \rangle$. Notice that the green polyline does not cross any orange rectangles.}
    \label{fig:crossing_polyline}
\end{figure}
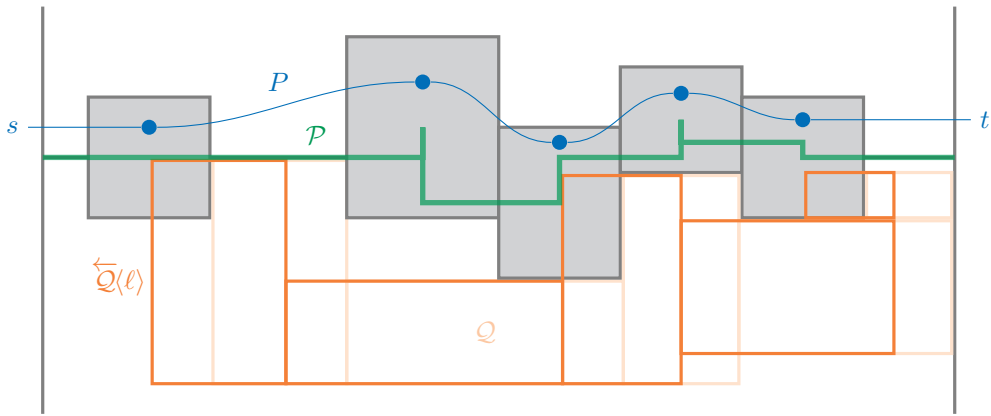

\begin{proposition}\label{prop:middle_poly_crossing}
    Suppose $P$ is an $s$-$t$ path in the conflict graph $G$ of the packing $\S$. Let $\P$ be the middle polyline of $P$ and let $\Q$ be the packing obtained from $\S$ by removing all the rectangles participating in $P$.
    Then $\P$ does not cross $\overleftarrow{\Q}\!\langle \ell \rangle$.
    The same goes for $\overrightarrow{\Q}\!\langle \ell \rangle$, and therefore also for $\overleftrightarrow{\Q}\!\langle \ell \rangle$.
\end{proposition}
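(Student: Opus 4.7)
The plan is to reduce all three assertions to the one about $\overleftarrow{\Q}\!\langle \ell \rangle$. For any placed rectangle $R=[a,b]\times[c,d]\in\Q$ we have $\overleftrightarrow{R}\!\langle \ell \rangle = \overleftarrow{R}\!\langle \ell \rangle \cup \overrightarrow{R}\!\langle \ell \rangle$, so the $\overleftrightarrow{\Q}\!\langle \ell \rangle$ claim follows from the other two; the $\overrightarrow{\Q}\!\langle \ell \rangle$ claim follows from the $\overleftarrow{\Q}\!\langle \ell \rangle$ one by a left-right symmetric argument. So I would fix $R=[a,b]\times[c,d]\in\Q$ and assume for contradiction that some point $p=(x',y')$ of $\P$ lies in the open interior $(a-\ell,b)\times(c,d)$ of $\overleftarrow{R}\!\langle \ell \rangle$.

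Every point of the middle polyline lies on one of three kinds of fragments: (i) a vertical segment at abscissa $x(R_j)+w(R_j)/2$ which, since both $y(R_j)+h(R_j)/2$ and $y_j$ lie in $(y(R_j),y(R_j)+h(R_j))$ by the visibility definition, is entirely contained in the open interior of $R_j$; (ii) the part of a horizontal segment at ordinate $y_j$ lying inside $R_j$ (the range $x\in[x(R_j)+w(R_j)/2,x(R_j)+w(R_j)]$) or, symmetrically, inside $R_{j+1}$; or (iii) the part lying on a witness segment $s_j$. In the easy situation $x'\geq a$ the point $p$ lies in the interior of $R$: in cases (i)--(ii) this means $R$ overlaps a path-rectangle, contradicting that $\S$ is a packing, while in case (iii) it contradicts the defining property that $s_j$ avoids every rectangle of $\S\setminus\{R_j,R_{j+1}\}$.

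The hard situation is $a-\ell<x'<a$, and this is where the hypothesis $w(R')\geq 2\ell$ for $R'\in\S$ must be used. In every subcase, $y'\in(c,d)$ also lies in the vertical interior of the relevant path-rectangle(s) --- namely $R_j$ in (i), one of $R_j$ or $R_{j+1}$ in (ii), and both in (iii) --- so the non-overlap condition of the packing forces $R$'s horizontal range $[a,b]$ to be disjoint from theirs, and in case (iii) the visibility hypothesis $R\cap s_j=\emptyset$ plays the analogous role. Combining these with $x'<a<x'+\ell$, I obtain:
\begin{itemize}
    \item In case (i), non-overlap with $R_j$ forces $a\geq x(R_j)+w(R_j)$ (the alternative $b\leq x(R_j)$ is ruled out by $x'<a$), while $a<x(R_j)+w(R_j)/2+\ell$; together $w(R_j)<2\ell$.
    \item In case (ii) with $p$ inside $R_j$, non-overlap with $R_j$ forces $a\geq x(R_j)+w(R_j)$; then $R\cap s_j=\emptyset$ rules out $b\leq x(R_{j+1})$, and non-overlap with $R_{j+1}$ yields $a\geq x(R_{j+1})+w(R_{j+1})\geq x(R_{j+1})+2\ell$, whereas $a<x(R_j)+w(R_j)+\ell\leq x(R_{j+1})+\ell$. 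The sub-case $p$ inside $R_{j+1}$ is analogous to (i).
    \item In case (iii), $R\cap s_j=\emptyset$ together with $x'<a$ forces $a>x(R_{j+1})$; non-overlap with $R_{j+1}$ then yields $a\geq x(R_{j+1})+2\ell$, contradicting $a<x(R_{j+1})+\ell$.
\end{itemize}
Every subcase produces a contradiction, so no such $p$ exists. The main obstacle is the case bookkeeping rather than any deep idea: each subcase reduces to the same geometric impossibility that a rectangle of $\Q$, of width at least $2\ell$, cannot simultaneously avoid the path-rectangles adjacent to the crossing point and fit in a horizontal strip of width less than $\ell$ adjacent to them.
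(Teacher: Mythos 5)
Your proposal is correct and takes essentially the same approach as the paper: fix $R \in \Q$, classify the fragment of the middle polyline that would cross $\overleftarrow{R}\!\langle \ell \rangle$, and in each case combine the fact that $\P$ cannot cross $R$ itself (path rectangles do not overlap $R$, and witness segments avoid $R$) with the width bound $w\geq 2\ell$ to force $R$ to overlap a path rectangle. The only difference is cosmetic: you split the horizontal fragment into three pieces (inside $R_j$, on $s_j$, inside $R_{j+1}$) instead of treating it whole as the paper does, which trades the paper's ``the horizontal segment has length $\geq 2\ell$'' observation for one extra subcase.
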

\begin{proof}
    Suppose $\P$ crosses $\overleftarrow{R}\!\langle \ell \rangle$ for some $R\in\Q$.
    Then there exists $R_1,R_2\in V(P)$ (which are possibly the left or the right side of the box) such that $R_1$ and $R_2$ see each other through a segment $s = [x(R_1)+w(R_1),x(R_2)] \times \{y\}$ and $R$ crosses one of the following segments:
    \begin{enumerate}
        \item $\{x(R_1)+w(R_1)/2\} \times [\min\{y(R_1)+h(R_1)/2,y\},\max\{y(R_1)+h(R_1)/2,y\}]$,
        \item $[x(R_1)+w(R_1)/2,x(R_2)+w(R_2)/2] \times \{y\}$,
        \item $\{x(R_2)+w(R_2)/2\} \times [\min\{y,y(R_2)+h(R_2)/2\},\max\{y,y(R_2)+h(R_2)/2\}]$.
    \end{enumerate}
    We show that every case leads to a contradiction.
    \begin{enumerate}
        \item Assume case 1. $\P$ crosses $\overleftarrow{R}\!\langle \ell \rangle$ but not $R$ so $x(R) \geq x(R_1)+w(R_1)/2$ and $x(R)-\ell \leq x(R_1)+w(R_1)/2$. Therefore $x(R_1) \leq x(R) \leq x(R_1)+w(R_1)/2+\ell \leq x(R_1)+w(R_1)$ because $w(R_1)\geq 2\ell$. Moreover, $[\min\{y(R_1)+h(R_1)/2,y\},\max\{y(R_1)+h(R_1)/2,y\}]\subseteq[y(R_1),y(R_1)+h(R_1)]$ by the definition of $y$. This means that $R$ and $R_1$ intersect at $(x(R),y')$ where $y'\in[y(R),y(R)+h(R)]\cap [\min\{y(R_1)+h(R_1)/2,y\},\max\{y(R_1)+h(R_1)/2,y\}]$, which is not possible.
        \item Assume case 2. This would mean that $y\in [y(R),y(R)+h(R)]$, $x(R) \geq x(R_2)+w(R_2)/2$ and $x(R)-\ell \leq x(R_2)+w(R_2)/2$ because $|[x(R_1)+w(R_1)/2,x(R_2)+w(R_2)/2]|\geq 2\ell$ and $\P$ crosses $\overleftarrow{R}\!\langle \ell \rangle$ but not $R$. Therefore $x(R_2) \leq x(R) \leq x(R_2)+w(R_2)/2+\ell \leq x(R_2)+w(R_2)$ because $w(R_2)\geq 2\ell$. This means that $R$ and $R_2$ intersect at $(x(R),y)$, which is not possible.
        \item Assume case 3. This is a similar argument as case 1, replacing $R_1$ by $R_2$.\qedhere
    \end{enumerate}
\end{proof}

Next, we need the following graph-theoretic observation.

\begin{proposition}\label{prop:separator}
    Let $G$ be a graph containing vertices $s$ and $t$. Suppose every $s$-$t$ path in $G$ at least $1/\eps$ internal vertices. 
    Then $G$ contains an $st$-separator of size at most $\eps (|V(G)|-2)$.
\end{proposition}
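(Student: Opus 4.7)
The plan is to use a BFS-style layering from $s$ to obtain many pairwise disjoint $st$-separators, and then extract a small one by averaging. For each non-negative integer $i$, let $L_i$ denote the set of vertices $v\in V(G)$ with $d_G(s,v)=i$, where $d_G$ is the graph distance.

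First I would argue that $L_i$ is an $st$-separator for every $1\leq i\leq d_G(s,t)-1$. Take any $s$-$t$ path $v_0=s, v_1, \dots, v_p=t$. Since adjacent vertices can differ in distance from $s$ by at most $1$, the integer sequence $d_G(s,v_0), d_G(s,v_1),\dots,d_G(s,v_p)$ starts at $0$, ends at $d_G(s,t)$, and changes by at most one at each step. By the discrete intermediate value property it must take every integer value in $\{0,1,\dots,d_G(s,t)\}$, so the path hits $L_i$ for each $1\leq i\leq d_G(s,t)-1$. This also shows that the layers $L_1,\dots,L_{d_G(s,t)-1}$ are pairwise disjoint (they are level sets of $d_G(s,\cdot)$) and disjoint from $\{s,t\}$.

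Next I would use the hypothesis on path length. Since every $s$-$t$ path has at least $1/\eps$ internal vertices and this is a statement about integers, any such path has at least $\lceil 1/\eps\rceil$ internal vertices, hence at least $\lceil 1/\eps\rceil+1$ edges. In particular $d_G(s,t)\geq \lceil 1/\eps\rceil+1$, so the layers $L_1,L_2,\dots,L_{\lceil 1/\eps\rceil}$ are all well-defined, pairwise disjoint, disjoint from $\{s,t\}$, and each is an $st$-separator.

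Finally I would conclude by averaging. We have
\[
\sum_{i=1}^{\lceil 1/\eps\rceil}|L_i|\ \leq\ |V(G)\setminus\{s,t\}|\ =\ |V(G)|-2,
\]
so the smallest layer among $L_1,\dots,L_{\lceil 1/\eps\rceil}$ has size at most $\frac{|V(G)|-2}{\lceil 1/\eps\rceil}\leq \eps(|V(G)|-2)$. Picking this layer yields the desired $st$-separator. I do not expect any real obstacle here: the only subtlety is the discrete intermediate value argument ensuring that each layer really is a separator, which is handled by the observation that consecutive vertices on a path have distances from $s$ differing by at most one.
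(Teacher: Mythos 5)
Your proof is correct, but it takes a genuinely different route from the paper. The paper's argument is a one-liner via Menger's theorem: since every $s$-$t$ path has at least $1/\eps$ internal vertices, any family of internally disjoint $s$-$t$ paths has size at most $\eps(|V(G)|-2)$ (otherwise the internal vertices would overcount $V(G)\setminus\{s,t\}$), and Menger then gives an $st$-separator of the same size. Your BFS-layering argument reproves the relevant direction of Menger in this special setting: the distance layers $L_1,\dots,L_{d_G(s,t)-1}$ are pairwise-disjoint $st$-separators, and since $d_G(s,t)\geq \lceil 1/\eps\rceil +1$ there are at least $\lceil 1/\eps\rceil$ of them, so averaging gives one of size at most $\eps(|V(G)|-2)$. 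Both are correct (yours even handles the degenerate case of $t$ unreachable from $s$ automatically, since then some layer is empty). The paper's version is shorter by appealing to a black box; yours is self-contained and more elementary, at the cost of the discrete intermediate-value argument you correctly flag as the only subtle step.
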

\begin{proof}
    As every $s$-$t$ path in $G$ contains at least $1/\eps$ internal vertices, one cannot find more than $\eps (|V(G)|-2)$ internally disjoint $s$-$t$ paths in $G$.
    By Menger's theorem, there is an $st$-separator of size at most $\eps (|V(G)|-2)$.
\end{proof}

We can now wrap up the section by proving the Structural Lemma.

\begin{proof}[Proof of \Cref{lem:structural}]
    Based on the assumed packing $\S$, we construct another packing $\S'$ and then we prove that it is structured and has size at least $(1-3\eps)k$. Let $G$ be the conflict graph of $\S$.
    Let $\F$ be an inclusion-wise maximal family $\F$ of internally disjoint $s$-$t$ paths in $G$, each with at most $1/\eps$ internal vertices. As the paths from $\F$ are internally disjoint, we can naturally enumerate them from bottom to top: $\F=\{P_1,P_2,\dots,P_m\}$. For convenience, let $P_0=P_{m+1}=\emptyset$.

    Because the conflict graph is planar by \Cref{lem:planarity}, by the Jordan Curve theorem, for each $i\in \{0,1,\ldots,m\}$ there is a set $V_i$ of vertices of $G$ that lies inside the cycle $P_i\cup P_{i+1}$.
    By construction of the conflict graph, $V_i$ is exactly the set of rectangles lying in the area $Z_i$ delimited by the box, the top polyline of $P_i$ and the bottom polyline of $P_{i+1}$.
    For each $V_i$ we construct a separating polyline $\P_i$ as follows:
    \begin{itemize}
        \item If $|V_{i-1}| \leq 1/\eps^2$ and $|V_i| \leq 1/\eps^2$, select the bottom polyline of $P_i$ as the separating~polyline.
        \item Otherwise, select the middle polyline of $P_i$ as the separating polyline.
    \end{itemize}
    All the polylines created are of complexity at most $4/\eps+1$ by \Cref{prop:polyline_complexity}.
    They partition the box into regions $B_0, B_1, \dots, B_{m+1}\subseteq B$, from the bottom to the top. Note that $B_i\supseteq Z_i$ for each relevant~$i$.

    Notice that in $G[V_i\cup \{s,t\}]$ there is no $s$-$t$ path of length at most $1/\eps$, because $\F$ is maximal.
    Let $C_i$ be the separator given by \Cref{prop:separator} for the graph $G[V_i\cup \{s,t\}]$. Then we have $|C_i|\leq \eps|V_i|$.
    
    We can now specify which rectangles we want include in $\S'$. We define $\S'$ to be the union of sets $V'_i$ for $i\in\{0,1,\ldots,m\}$, where
    \[V'_i=\begin{cases}
        V_i \cup V(P_i)\setminus \{s,t\} & \text{if } |V_{i-1}| \leq 1/\eps^2 \text{ and }|V_i| \leq 1/\eps^2, \\
        V_i & \text{if } |V_{i-1}| > 1/\eps^2 \text{ and }|V_i| \leq 1/\eps^2, \\
        V_i\setminus C_i & \text{otherwise.}
    \end{cases}\]
    We now argue that for each $i\in \{0,1,\ldots,m\}$, either $|V'_i|\leq 2/\eps^2$ and $V'_i\subseteq \S[B_i]$, or $\roundell{V'_i}$ can be packed in $B_i$. 

    First, observe that if 
    $|V_i|\leq 1/\eps^2$, then  $|V'_i|\leq |V_i\cup V(P_i)\setminus \{s,t\}| \leq 1/\eps^2+ 1/\eps \leq 2/\eps^2$. Further, if $|V_{i-1}|>1/\eps^2$ then $V_i'=V_i$ and trivially $\S[B_i]\supseteq \S[Z_i]=V_i$, and if $|V_{i-1}|\leq 1/\eps^2$ then $\P_i$ is the bottom polyline of $P_i$ and we have $\S[B_i]\supseteq V_i\cup V(P_i)\setminus \{s,t\}=V_i'$ as well. 

    Second, consider the case when $|V_i|> 1/\eps^2$. Notice that then $\P_i$ is the middle polyline of $P_i$ and $\P_{i+1}$ is the middle polyline of $P_{i+1}$, and $V'_i=V_i\setminus C_i$. Because $V_i$ can be packed inside $Z_i$, we can use \Cref{lem:approx_widths} on $V_i$ and $Z_i$ to pack $\roundell{V'_i}$ into $\overleftrightarrow{(Z_i)}\!\langle \ell \rangle$.
    By \Cref{prop:middle_poly_crossing}, we know that $\overleftrightarrow{(Z_i)}\!\langle \ell \rangle \subseteq B_i$, hence we can pack $\roundell{V'_i}$ into $B_i$.

    We conclude that indeed, $\S'$ is an $(\eps,\ell)$-structured packing, as witnessed by the polylines $\P_i$ for $i\in [m]$.
    What is left to show is that $|\S'|\geq (1-3\eps)k$. Call an index $i\in [m]$ {\em{heavy}} if $|V_i|>1/\eps^2$. Observe that
    $\S'\supseteq \S\setminus \bigcup_{i\colon \text{heavy}} C_i\cup V(P_i)\cup V(P_{i+1})$,
    hence it suffices to prove that
    $\left|\bigcup_{i\colon \text{heavy}} C_i\cup V(P_i)\cup V(P_{i+1})\setminus \{s,t\}\right|\leq 3\eps k$.
    Fix a heavy index $i$. First, observe that
    $|V(P_i)\cup V(P_{i+1})\setminus \{s,t\}|\leq 2/\eps\leq 2\eps|V_i|$, as each path $P_i$ has at most $1/\eps$ internal vertices.
    Second, by construction we have
    $|C_i|\leq \eps|V_i|$.
    Summing those inequalities throughout all heavy $i$ yields that 
    \[\left|\bigcup_{i\colon \text{heavy}} C_i\cup V(P_i)\cup V(P_{i+1})\setminus \{s,t\}\right|\leq \sum_{i\colon \text{heavy}} 3\eps |V_i|\leq 3\eps k,\]
    as required.
\end{proof}

\section{The algorithm}

In this section we finalize the proof of \Cref{thm:main}.
The section is divided into two parts. 
The first subsection describes an algorithm working under the assumption that the input set $\R$ only contains rectangles of width at least $2\ell$, for some $\ell > 0$.
In the second subsection, we show how to obtain the assumption that $\R$ only contains  rectangles of width at least $2\ell$ for $\ell = N_1/(\delta(B)k^2)$, at the expense of deleting an $\eps$ fraction of the rectangles in the packing. Therefore, we get a full algorithm as a corollary.

Throughout this section, fix an instance $(B,\R,k)$ of {\sc{2D Knapsack}}, where $B=[0,N_1]\times [0,N_2]$ and $\R$ consists of wide items.

\subsection{The algorithm for rectangles of substantial width}

The dynamic programming algorithm will gradually guess a good partition of the box into regions (that we know exists by \Cref{lem:structural}), and then solve the problem in each region independently. In order to avoid repeating the use of the same rectangles in different regions, we use color-coding.
\begin{definition}[Good coloring]
    Given a set of rectangles $\R$ and a subset $\S\subseteq \R$ of size $k$, a function $\col\colon \R\to[k]$ is a \emph{good coloring} for $\S$ if rectangles of $\S$ have pairwise different colors under $\col$.
\end{definition}
We cannot directly guess a good coloring of the rectangles, as a priori there are too many candidates.
We instead use the following classic result of Naor et al. \cite{naor1995splitters}, which says that there is only an fpt-sized family of candidates for a good coloring.

\begin{proposition}[Naor et al. \cite{naor1995splitters}]\label{prop:perfect_hash}
    For every set $\R$ and positive integer $k$, there exists a family $\F$ of colorings of $\R$ with color set $[k]$ such that $|\F|\leq e^k k^{\O(\log k)} \log |\R|$ and for every $\S\subseteq \R$ of size $k$, in $\F$ there is a good coloring for $\S$. Moreover, $\F$ can be computed in time $e^k k^{\O(\log k)} |\R|\log |\R|$.  
\end{proposition}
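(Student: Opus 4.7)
The plan is to combine a standard probabilistic argument for existence with the splitter-based derandomization of Naor, Schulman, and Srinivasan for the efficient construction. Since the statement is cited as a result of \cite{naor1995splitters}, the role of the proof here is really to sketch the two ideas that go into it.

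First, existence: for a uniformly random coloring $\col\colon \R \to [k]$ and any fixed $\S\subseteq \R$ of size $k$, the probability that $\col$ restricts to an injection on $\S$ equals $k!/k^k$, which is at least $e^{-k}$ by Stirling's approximation. Hence the probability that a given $\S$ fails to be well-colored by any of $N$ independent random colorings is at most $(1-e^{-k})^N \leq \exp(-N e^{-k})$. Taking a union bound over all $\binom{|\R|}{k}\leq |\R|^k$ subsets of size $k$ and choosing $N = \Theta(e^k \cdot k \ln |\R|)$ drives this probability below $1$, so a family of size $e^k \cdot k^{\O(1)} \cdot \log|\R|$ with the required property exists. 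This already matches the claimed size bound up to a $k^{\O(\log k)}$ factor, but the argument is non-constructive.

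Second, to make the construction efficient and keep the running time polynomial in the family size, one derandomizes via $(n,k)$-splitters. Recall that an $(n,k,k)$-splitter is an explicit family of functions $[n]\to[k]$ such that for every $k$-subset $S$ some function in the family partitions $S$ evenly. Naor, Schulman, and Srinivasan build such splitters by combining $k$-wise independent hash families with algebraic constructions, and then compose splitters recursively with perfect hash families on $[k^2]\to[k]$ to drive the partition classes all the way down to singletons. The resulting family has size $e^k\cdot k^{\O(\log k)}\cdot \log|\R|$ and can be enumerated in time linear in its size multiplied by $|\R|$, giving exactly the bounds claimed in the statement.

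The main obstacle lies in the second step: the probabilistic bound is nearly immediate, but getting an explicit family matching the $e^k \cdot k^{\O(\log k)} \cdot \log|\R|$ bound \emph{together} with a constructive algorithm of essentially the same running time is precisely the technical contribution of \cite{naor1995splitters}. Because we use the result as a black box elsewhere in the paper, we defer the detailed splitter construction and its analysis of correctness and complexity to that reference.
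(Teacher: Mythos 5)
The paper gives no proof of this proposition---it is cited verbatim as a black box from Naor, Schulman, and Srinivasan---and your proposal does exactly the right thing: the probabilistic existence bound via $k!/k^k\ge e^{-k}$ and a union bound is correct (and in fact yields a family of size $e^k\cdot k\cdot\O(\log|\R|)$, within the claimed bound), while the explicit construction via splitters is correctly attributed and deferred to the reference. This matches the paper's treatment, so nothing further is needed.
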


Next, we observe that once the number of different widths in the instance has been bounded, one can restrict attention to a small set of candidate rectangles. For this, notice the following: if we have a colored packing of size $k$ that contains a rectangle $R$, and in the packing we did not use another rectangle $R'$ of the same color and width as $R$, but satisfying $h(R')\leq h(R)$, then we can replace $R$ with $R'$ and we will still have a colored packing.
This observation leads to defining the following operation.
\begin{definition}[$\reduce_k(\R,\col)$]
    Suppose $\col\colon \R\to [k]$ is a coloring of a set of rectangles $\R$ with color set $[k]$. Then for a positive integer $w$ and color $i\in [k]$, let $\R_{w,i}$ be the set of $k$ smallest-height rectangles among the rectangles of $\{R\in\R \mid w(R)=w, \col(R)=i\}$. In case $|\{R\in\R \mid w(R)=w, \col(R)=i\}|<k$, we set $\R_{w,i}=\{R\in\R \mid w(R)=w, \col(R)=i\}$. 
    We define $\reduce_k(\R) = \bigcup_{w \in w(\R), i\in [k]} \R_{w,i}$.
\end{definition}
Notice that $\reduce_k(\R)$ contains at most $k^2|w(\R)|$ elements: for every possible width and every possible color, the at most $k$ rectangles of this specific width and of smallest height. Also, we have the following very simple observation.

\begin{lemma}\label{lem:replacement}
    Suppose $\R$ is a set of rectangles and $\col\colon \R\to [k]$ is a coloring function such that $k'\leq k$ rectangles from $\R$ with pairwise different colors can be packed in a zone $Z\subseteq \mathbb{R}^2$. Then one can also pack in $Z$ a set of $k'$ rectangles from $\reduce_k(\R,\col)$ with pairwise different colors.
\end{lemma}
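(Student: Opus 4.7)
The plan is to perform a rectangle-by-rectangle replacement argument. Let $\S = \{R_1, R_2, \ldots, R_{k'}\} \subseteq \R$ be the set of rectangles in the assumed packing, with pairwise distinct colors $i_j := \col(R_j)$, $j \in [k']$, and let $w_j := w(R_j)$. I would build the new packing by replacing each $R_j$ individually with some $R'_j \in \reduce_k(\R, \col)$ of the same color and same width but no larger height, then place $R'_j$ at the same bottom-left corner as $R_j$.

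For the construction of $R'_j$, fix $j \in [k']$ and consider the set $\R_{w_j, i_j}$. If $R_j \in \R_{w_j, i_j}$, set $R'_j := R_j$. Otherwise, $\{R \in \R \mid w(R) = w_j, \col(R) = i_j\}$ has strictly more than $k$ elements, $R_j$ is not among the $k$ smallest-height ones, and in particular every rectangle in $\R_{w_j, i_j}$ has height at most $h(R_j)$; choose any such $R'_j \in \R_{w_j, i_j}$. In either case $R'_j \in \reduce_k(\R, \col)$, $\col(R'_j) = i_j$, $w(R'_j) = w_j$, and $h(R'_j) \leq h(R_j)$.

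Now place $R'_j$ with the same bottom-left corner as the placement of $R_j$. Since $w(R'_j) = w(R_j)$ and $h(R'_j) \leq h(R_j)$, the placed rectangle $R'_j$ is contained in the placed $R_j$. Consequently $R'_j \subseteq R_j \subseteq Z$, and any two placed rectangles $R'_j$ and $R'_{j'}$ do not overlap (their interiors are contained in the interiors of non-overlapping $R_j$ and $R_{j'}$). Finally, distinctness of the set $\{R'_1, \ldots, R'_{k'}\}$ as elements of $\R$ follows from distinctness of their colors: $\col(R'_j) = i_j$, and the $i_j$ are pairwise distinct.

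There is no real obstacle here; the statement is essentially the definitional motivation for $\reduce_k$. The only point worth being explicit about is that a replacement $R'_j$ must be placed at the same corner as $R_j$ (so that shrinking the height keeps us inside the zone and preserves non-overlap), and that since replacements preserve colors, the distinctness of colors automatically gives distinctness of the chosen rectangles.
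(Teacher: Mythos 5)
Your proof is correct and follows essentially the same replacement idea as the paper: each rectangle not in $\reduce_k(\R,\col)$ is swapped for a same-width, same-color rectangle of no greater height, placed at the same bottom-left corner. The only cosmetic difference is that the paper performs substitutions one at a time and invokes the existence of an as-yet-unused replacement, whereas you do all replacements simultaneously and derive distinctness of the chosen replacements directly from the distinctness of colors, which is arguably a touch cleaner.
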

\begin{proof}
    Let $\Q$ be the assumed packing of $k'\leq k$ rectangles from $\R$ of pairwise different colors in the zone $Z$. Note that if $\Q$ contains some rectangle of $R\in \R\setminus \reduce_k(\R,\col)$, then there exists another rectangle $R'\in \reduce_k(\R,\col)$ with $w(R')=w(R)$, $\col(R')=\col(R)$ and $h(R')\leq h(R)$ such that $R'$ was not used in the packing $\Q$. Hence, we can substitute $R$ with $R'$ in the packing $\Q$, fitting $R'$ within the area freed by removing $R$ from the packing. By applying such substitutions exhaustively, we obtain a packing in $Z$ consisting of $k'$ rectangles from $\reduce_k(\R,\col)$.
\end{proof}

Next, we use the following definitions to guess the polylines in a bottom to top order. For two monotone polylines $\P,\P'$ that start at the left side of $B$ and finish at the right side of $B$, we say that $\P'$ is {\em{below}} $\P$ (denoted by $\P' \leq \P$), if for every $x,y,y'$, $(x,y) \in \P$ and $(x,y')\in\P'$ implies $y'\leq y$.
We write $\P' < \P$ if $\P' \leq \P$ and $\P' \neq \P$. 
Given two polylines $\P'<\P$, we want to be able to solve the problem in the following sub-region:
\begin{definition}[$\container{\P',\P}$]
    For polylines $\P' < \P$, $\container{\P',\P}$ is the container (c.f. \Cref{def:container}) delimited by the box $B$, $\P'$ at the bottom and $\P$ at the top.
\end{definition}
Notice that if $\P$ has complexity $m$ and $\P'$ has complexity $m'$ then $\container{\P',\P}$ has complexity $m+m'+2$.

Now we give the algorithm in the case  when all rectangles in $\R$ have substantial width. This algorithm is encapsulated in the following lemma.

\begin{lemma}\label{lem:wide-instance}
    There is an algorithm that given $\eps>0$ and an instance $(B,\R,k)$ of {\sc{2D Knapsack}} in which all items are wide and have width at least $N_1/\alpha$, either returns a packing of size at least $(1-\eps)k$ or correctly concludes that there is no packing of size $k$. The running time is $(k+1/\eps)^{\O(k+1/\eps)}\cdot \alpha^{\O(k)}\cdot (|\R|\|B\|)^{\O(1/\eps^2)}$.
\end{lemma}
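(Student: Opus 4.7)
The plan is to combine three ingredients from the paper: color-coding via \Cref{prop:perfect_hash}, the structural \Cref{lem:structural} applied with $\ell = N_1/(2\alpha)$ (so that by the width hypothesis every item has width at least $2\ell$), and the exact packing algorithm \Cref{lem:algo_bounded_rects}. The core idea is to run a bottom-to-top dynamic program that guesses the polylines promised by the structural lemma, and for each region delimited by two consecutive polylines, uses one of two subroutines depending on whether that region is light (brute-force a small subset) or roundable (use rounded widths together with the $\reduce_k$ trick).

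First I would enumerate the family $\F$ of colorings from \Cref{prop:perfect_hash} and, for each $\col \in \F$, run a dynamic program searching for a \emph{colorful} packing (one in which all rectangles have pairwise distinct colors). The DP table is indexed by pairs $(\P, S)$, where $\P$ is a monotone polyline of complexity at most $12/\eps + 1$ from the left side to the right side of $B$ with breakpoints at integer coordinates, and $S \subseteq [k]$ is a set of colors; the value $\DP{\P, S}$ is the maximum size of a colorful packing lying below $\P$ that uses exactly the colors of $S$. Transitions guess a previous polyline $\P' \leq \P$ and a color set $S' \subseteq S$, and call a fill subroutine returning the largest colorful packing inside $\container{\P', \P}$ using only colors from $S \setminus S'$. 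The overall output is $\max_S \DP{\P_{\mathrm{top}}, S}$ where $\P_{\mathrm{top}}$ is the top of $B$, further maximized over all $\col \in \F$.

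The fill subroutine returns the larger of two procedures. In the \emph{light} procedure, I would enumerate every subset $U \subseteq \R$ of size at most $2/\eps^2$ with pairwise distinct colors in the allowed set $T = S \setminus S'$, and invoke \Cref{lem:algo_bounded_rects} on $U$ and the container $\container{\P', \P}$ (which has complexity $O(1/\eps)$) to test feasibility. In the \emph{roundable} procedure, I would form $\R^* := \reduce_k(\roundell{\R}, \col)$; since rounded widths are multiples of $\ell' = \ell^2/N_1 = N_1/(4\alpha^2)$ and are bounded by roughly $N_1$, there are only $O(\alpha^2)$ distinct rounded widths, so $|\R^*| = O(\alpha^2 k^2)$. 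Then I would enumerate every subset $U \subseteq \R^*$ of size at most $k$ with pairwise distinct colors in $T$ and invoke \Cref{lem:algo_bounded_rects} on the \emph{rounded} rectangles $U$ inside $\container{\P', \P}$; if they fit, the corresponding original items in $\R$, which have the same heights but smaller widths, also fit in the same positions, so mapping back yields a legitimate colorful packing.

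For correctness, if a packing $\S$ of size $k$ exists, \Cref{prop:perfect_hash} produces some $\col \in \F$ good for $\S$, and \Cref{lem:structural} applied with $\eps/3$ yields an $(\eps/3, \ell)$-structured colorful packing $\S'$ of size at least $(1-\eps)k$, witnessed by polylines of complexity at most $4/(\eps/3) + 1 = 12/\eps + 1$. An induction on the region index shows that the DP recovers $|\S'|$ rectangles: in a light region, $\S'[B_i]$ is itself among the subsets enumerated by the light procedure; in a roundable region, \Cref{lem:replacement} applied to $\roundell{\R}$ with the inherited coloring provides a subset of $\R^*$ of the same cardinality as $\S'[B_i]$ whose rounded rectangles pack into $B_i$, which is exactly what the roundable procedure accepts. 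The running time is the product of $|\F|$, the number of DP states $2^k \cdot \|B\|^{O(1/\eps)}$, the number of transition choices $2^k \cdot \|B\|^{O(1/\eps)}$, and the per-transition cost, dominated by the roundable subroutine's $(\alpha^2 k^2)^{O(k)} \cdot (k + 1/\eps)^{O(k)}$ and the light subroutine's $|\R|^{O(1/\eps^2)}$ factors; this yields the claimed bound. The main obstacle I anticipate is consistently threading the color constraint through both subroutines --- in particular, ensuring that the roundable procedure, which selects from $\reduce_k(\roundell{\R}, \col)$ rather than from $\R$ directly, never reuses a rectangle already placed in a lower region (which is precisely what tracking $S \setminus S'$ in the DP state enforces) and that this reduction is lossless when passing back to original widths, for which \Cref{lem:replacement} is the right tool.
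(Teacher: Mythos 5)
Your proposal matches the paper's proof almost step for step: color-coding via \Cref{prop:perfect_hash}, a bottom-to-top dynamic program over (polyline, color-set) pairs, a fill subroutine that branches on ``light'' (brute-force over $\O(1/\eps^2)$-size subsets) versus ``roundable'' (subsets of $\reduce_k(\roundell{\R},\col)$, checked via \Cref{lem:algo_bounded_rects} and mapped back to the original widths), and the same correctness and running-time analysis via \Cref{lem:structural} and \Cref{lem:replacement}.

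Two bookkeeping slips are worth fixing, though neither affects the structure of the argument or the stated bound. First, the DP value $\DP{\P,S}$ should be defined as the largest colorful packing below $\P$ using colors \emph{contained in} $S$, not ``exactly the colors of $S$'': your fill subroutine only promises to stay within $S\setminus S'$, so under the ``exactly'' reading the recursion $\DP{\P,S}=\max_{\P',S'\subseteq S}\bigl(\DP{\P',S'}+|\mathrm{fill}(\P',\P,S\setminus S')|\bigr)$ is not self-consistent --- the combined packing uses \emph{some} subset of $S$ containing $S'$, not necessarily all of $S$. (The paper accordingly defines its cell with ``only rectangles with colors in $C$''.) Second, since you apply \Cref{lem:structural} at accuracy $\eps/3$, the light regions of the resulting $(\eps/3,\ell)$-structured packing may contain up to $2/(\eps/3)^2=18/\eps^2$ rectangles; the light procedure must enumerate subsets of size up to $18/\eps^2$, not $2/\eps^2$, or it could miss the witnessing subsets. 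You correctly rescaled the polyline complexity to $12/\eps+1$ but left the light-region bound unscaled. Both fixes are constant-factor adjustments absorbed by the $\O(\cdot)$ notation.
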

\begin{proof}
Let $\ell=N_1/(2\alpha)$; thus every rectangle on input has width at least $2\ell$. For clarity of presentation we allow the algorithm to output a packing of size at least $(1-3\eps)k$; then the result as stated in the lemma can be obtained by rescaling $\eps$ by factor $3$.

We first explain the algorithm. Compute $\F$ as given by \Cref{prop:perfect_hash}.
Then, guess (by trying all choices) a coloring  $\col\in \F$. The idea is now to use dynamic programming to compute a maximum-size structured packing for the colored instance.
More precisely, for every monotone polyline $\P$ of complexity at most $4/\eps+1$ connecting the left and the right side of $B$, and for every $C\subseteq [k]$, we shall compute the value $\DP{\P,C}$ defined as follows: $\DP{\P,C}$ is a maximum-size packing that contains only rectangles with colors in $C$, is colored injectively by $\col$, and is placed entirely below $\P$ with the added constraint that it is a subset of some $(\eps,\ell)$-structured packing.

To compute the value $\DP{\P,C}$ for given $\P$ and $C$, we iterate over all polylines $\P'$ of complexity at most $4/\eps+1$ that are below $\P$.
Let $B' = \container{\P',\P}$ be the container between $\P$ and $\P'$.
Iterate over all $C' \subseteq C$; this is the set of colors guessed to be used in~$B'$. 
Let $\R' = \reduce_k(\roundell{\R},\col)$ be the reduced set of rounded rectangles, where colors are naturally inherited from $\R$ during rounding.
Compute the following packings:
\begin{itemize}
    \item $\S_1$ is the largest packing in $B'$ consisting of at most $2/\eps^2$ rectangles with pairwise different colors from $C'$. This packing can be computed in time $|\R|^{\O(1/\eps^2)}\cdot (1/\eps)^{\O(1/\eps^2)}$ by first guessing the set of rectangles participating in it, and then checking whether the packing can be realized using the algorithm of Lemma~\ref{lem:algo_bounded_rects}. 
    \item $\S_2$ is the largest packing in $B'$ consisting of at most $k$ rectangles from $\R'$ with pairwise different colors from $C'$. Again, this packing can be computed in time $|\R'|^{\O(k)}\cdot (k+1/\eps)^{\O(k)}$ by first guessing the set of rectangles participating in it, and then checking whether the packing can be realized using the algorithm of Lemma~\ref{lem:algo_bounded_rects}. 
\end{itemize}
Iterate over $\S\in \{\S_1,\S_2\}$, and keep as $\DP{\P,C}$ the set $\DP{\P',C\setminus C'} \cup \S$ of maximum size over all the sets iterated on.
Finally, as the solution to the overall problem, return $\DP{\P,[k]}$ where $\P$ is the top side of $B$, provided this packing has size at least $(1-3\eps)k$. Otherwise, return that there is no packing of size $k$.

This concludes the description of the algorithm. We are left with (i) analyzing its running time and (ii) arguing that in case there is a packing of size at least $k$, the algorithm will output a packing of size at least $(1-3\eps)k$.

Let us start with assertion (ii). For this, suppose there exists a packing $\S$ of size $k$. Since all rectangles of $\S$ have width at least $2\ell$, by \Cref{lem:structural} there exists an $(\eps,\ell)$-structured packing $\S'$ of size at least $(1-3\eps)k$. Further, by the properties of $\F$, there exists $\col\in \F$ such that $\col$ is injective on $\S'$.
Now, let $\P_1,\P_2,\ldots,\P_m$ be the polylines witnessing the structuredness of $\S'$, and let $\emptyset=C_0\subseteq C_1\subseteq C_2\subseteq \ldots\subseteq C_m\subseteq C_{m+1}=[k]$ be such that $C_i$ is the sets of colors used by the rectangles of $\S'$ lying below~$\P_i$, where $\P_0$ and $\P_{m+1}$ are the bottom and the top side of $B$, respectively. A straightforward inductive argument using the structuredness of $\S'$ and \Cref{lem:replacement} shows now that for $i=0,1,\ldots,m+1$, the cell $\DP{\P_i,C_i}$ will contain a packing of size at least as large as the number of rectangles of $\S'$ lying below~$\P_i$. Hence, the algorithm will return a packing of size at least $|\S'|\geq (1-3\eps)k$, as promised.

We are left with analyzing the running time. The number of different colorings $\col\in \F$ is $|\F|\leq 2^{\O(k)}\cdot \log |\R|$.
Further, observe that the number of different polylines considered by the algorithm is bounded by $\|B\|^{\O(1/\eps)}$ and there are $2^k$ different subsets of colors.
Hence, the total number of cells $\DP{\P,C}$ considered by the algorithm is bounded by $2^k\cdot \|B\|^{\O(1/\eps)}$.
As argued, the time spent on computing a single value of $\DP{\P,C}$ is bounded by $2^k\cdot \|B\|^{\O(1/\eps)}$ (the number of choices for $\P'$ and $C')$ times 
\[|\R|^{\O(1/\eps^2)}\cdot (1/\eps)^{\O(1/\eps^2)}+|\R'|^{\O(k)}\cdot (k+1/\eps)^{\O(k)}.\]
Observe now that the rectangles of $\roundell{\R}$ have at most $\O(N_1/\ell')$ different widths, where $\ell'=\ell^2/N_1$. Since $\ell=N_1/2\alpha$, we conclude that the total number of different widths of the rectangles of $\roundell{\R}$ is bounded by
\[\O(N_1/\ell')=\O(N_1^2/\ell^2)\leq \O(\alpha^2).\]
Therefore,
\[|\R'|=|\reduce_k(\roundell{\R,\col})|\leq \O(\alpha^2 k^2).\]
Putting everything together, we infer that the running time of the algorithm is bounded by 
\begin{align*}
    & 2^{\O(k)}\cdot \log |\R|\cdot 2^{\O(k)}\cdot \|B\|^{\O(1/\eps)}\cdot \left(|\R|^{\O(1/\eps^2)}\cdot (1/\eps)^{\O(1/\eps^2)} + (\alpha^2 k^2)^{\O(k)}\cdot (k+1/\eps)^{\O(k)}\right)\\
    \leq &  (k+1/\eps)^{\O(k+1/\eps^2)}\cdot \alpha^{\O(k)}\cdot (|\R|\|B\|)^{\O(1/\eps^2)},
\end{align*}
as promised.
\end{proof}

\subsection{Full algorithm}

We now present the complete algorithm, which essentially boils down to making a reduction to the case when all rectangles on input have width at least $2\ell$, where $\ell=N_1/(\delta(B)k^2)$.
In the next lemma, we explain how to perform this reduction at the cost of removing $\eps k$ rectangles from the packing.
\begin{lemma}\label{lem:large_width}
    Let $\eps >0$. Suppose there is an algorithm $\mathscr A$ that, given a {\sc{2D Knapsack}} instance $(B=[0,N_1]\times [0,N_2],\R,p)$ in which all items are wide and have width at least $N_1/(\delta q^2)$ and the aspect ratio of $B$ is $\delta$, returns a packing of size at least $(1-\eps)p$ or attests that there is no packing of size $p$ in time $f(p,q,\eps,\delta,\|B\|,|\R|)$.
    Then there is an algorithm $\mathscr B$ that, given a {\sc{2D Knapsack}} instance $(B=[0,N_1]\times [0,N_2],\R,k)$ in which all items are wide and the aspect ratio of $B$ is $\delta$, returns a packing of size $(1-2\eps)k$ or attests that there is no packing of size $k$ in time $f(k,k,\eps,\delta,\|B\|,|\R|)+(1/\eps+|\R|)^{\O(1/\eps)}$.
\end{lemma}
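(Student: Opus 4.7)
The plan is to reduce $\mathscr{B}$'s instance to one admissible to $\mathscr{A}$, by handling ``small'' items separately. Set $2\ell:=N_1/(\delta k^2)$ and partition $\R=\R_s\sqcup\R_\ell$, where $\R_s$ consists of the \emph{small} items (those with width less than $2\ell$) and $\R_\ell$ consists of the \emph{large} items (those with width at least $2\ell$). Since every rectangle in $\R$ is wide, each small item also has height less than $2\ell$, so it fits inside a $2\ell\times 2\ell$ square.

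The heart of the proof is a structural lemma: \emph{if $(B,\R,k)$ admits a packing of size $k$, then for some $h_s=\Theta(\eps N_2)$ there is a packing of size at least $(1-\eps)k$ of the form $\S_s'\cup\S_\ell'$, where $\S_s'\subseteq\R_s$ is packed inside the bottom strip $B_s:=[0,N_1]\times[0,h_s]$ and $\S_\ell'\subseteq\R_\ell$ is packed inside the complementary rectangle $B_\ell:=[0,N_1]\times[h_s,N_2]$.} The rectangle $B_\ell$ has aspect ratio at most $\delta/(1-\eps)$, which is essentially $\delta$; in particular, the width threshold that $\mathscr{A}$ requires for the adjusted aspect ratio, $N_1/((\delta/(1-\eps))k^2)$, is bounded by $2\ell$, so items of $\R_\ell$ remain admissible for $\mathscr{A}$.

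The proof of the structural lemma uses a standard strip-freeing argument in the spirit of~\cite{grandoni2019pas}: starting from any packing $\S$ of size $k$, partition $B$ into $\lceil 1/\eps\rceil$ disjoint horizontal bands of height $h_s$, handle ``tall'' items (height at least $h_s$) separately using an area bound (there are only $O(\delta/\eps^2)$ of them, which is $O(\eps k)$ for $k$ sufficiently large, and we can absorb smaller $k$ into the error), observe that the remaining ``short'' items each cross at most one band boundary, and apply pigeonhole to single out a band whose removal costs at most $\eps k$ items. Shifting then frees a bottom strip of height $h_s$. Finally, the small items of $\S$ are repacked inside this strip via an NFDH-style shelf argument, which succeeds because the strip area $h_s\cdot N_1=\Theta(\eps N_1 N_2)$ dwarfs the total area $\leq 4k\ell^2=O(N_1^2/(\delta^2 k^3))$ of any $k$ small items.

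Given the structural lemma, algorithm $\mathscr{B}$ proceeds as follows. Enumerate a target size $t^\ast$ for the small-item portion over a grid $\{0,\lceil\eps k\rceil, 2\lceil\eps k\rceil,\dots,k\}$ of size $O(1/\eps)$. For each candidate $t^\ast$:
\textbf{(i)} using NFDH, pack $t^\ast$ small items (say, those with smallest heights) into $B_s$;
\textbf{(ii)} invoke $\mathscr{A}$ on $(B_\ell,\R_\ell,k-t^\ast)$ with aspect-ratio parameter $\delta':=\delta/(1-\eps)$;
\textbf{(iii)} if $\mathscr{A}$ returns a packing of size at least $(1-\eps)(k-t^\ast)$, record the union of the two packings as a candidate output, of size at least $t^\ast+(1-\eps)(k-t^\ast)\geq(1-\eps)k\geq(1-2\eps)k$. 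Output the largest recorded candidate of size at least $(1-2\eps)k$; if none exists, attest that no packing of size $k$ exists. Correctness follows from the structural lemma: if the optimal such $t^\ast$ is $t^\star$, then the value $t^\ast$ in the grid rounded up to the nearest multiple of $\lceil\eps k\rceil$ differs from $t^\star$ by at most $\eps k$, and the absorbed $\eps k$ slack is enough for $\mathscr{A}$ to succeed.

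The running time is $O(1/\eps)$ calls to $\mathscr{A}$, each contributing at most $f(k,k,\eps,\delta,\|B\|,|\R|)$ (absorbing the $1/\eps$ factor into $f$), plus NFDH and bookkeeping contributing $(1/\eps+|\R|)^{\O(1/\eps)}$, as claimed.

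The main technical obstacle is the proof of the structural lemma, in particular the strip-freeing step: it requires careful joint control of tall items (bounded by area) and of short items (bounded by pigeonhole on crossings). A further subtlety arises for nearly-square boxes (where $N_2\approx N_1/\delta$), in which case reserving a horizontal strip would push the aspect ratio of $B_\ell$ noticeably beyond $\delta$; such cases can be handled symmetrically by reserving a vertical strip on a side of $B$, exchanging the roles of width and height.
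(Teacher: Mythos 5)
Your reduction hinges on an unproven ``structural lemma'' that segregates the narrow items into a freed horizontal strip of height $\Theta(\eps N_2)$, and the sketch you give for it does not go through. The tall items you must clear out of the strip --- those of height at least $h_s=\Theta(\eps N_2)$, hence (being wide) of width at least $\eps N_2\geq \eps N_1/\delta$ and area at least $\eps^2 N_1N_2/\delta$ --- number up to $\delta/\eps^2$ by the area bound, and this is $O(\eps k)$ only when $k\geq \delta/\eps^3$. Your remedy, ``absorb smaller $k$ into the error,'' is not available: when $k<\delta/\eps^3$ the tall items can constitute the \emph{entire} optimal packing (e.g.\ $k$ items of size $(N_1/k)\times N_2$ tiling the box when $\delta\geq k$), so they cannot be charged to an $\eps k$ error term, and handling this regime by brute force would cost $|\R|^{\O(\delta/\eps^3)}$, far outside the allowed budget $(1/\eps+|\R|)^{\O(1/\eps)}$ since $\delta$ is unbounded. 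This is precisely the obstruction the paper's introduction identifies: without rotation, the Grandoni-et-al.\ strip-freeing argument gives no control over whether a horizontal or a vertical strip is freed, which is why the paper abandons that route. Your fallback of ``reserving a vertical strip instead'' just relocates the same problem. There are also smaller mismatches: the box $B_\ell$ handed to $\mathscr A$ has aspect ratio $\delta/(1-\Theta(\eps))\neq\delta$, so the claimed time bound $f(k,k,\eps,\delta,\|B\|,|\R|)$ is not literally obtained, and packing $t^*\leq k$ narrow items into a strip of height $\eps N_2$ already presupposes $k>1/\eps$, a case split you never make explicit.

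The intended proof is far more elementary and needs no strip at all, because the narrow items are genuinely tiny: each has width, hence also height, at most $N_1/(\delta k^2)\leq N_2/k^2$. Let $\mathcal W$ be these items and $w=|\mathcal W|$. If $w\geq k$, stacking any $k$ of them vertically already gives a packing of size $k$ (total height $k\cdot N_1/(\delta k^2)\leq N_2$). Otherwise run $\mathscr A$ on $(B,\R\setminus\mathcal W,k-w)$ --- same box, same $\delta$, so the running time is exactly $f(k-w,k,\eps,\delta,\|B\|,|\R|)$ --- and then reinsert $\mathcal W$ greedily: if every returned rectangle has height at most $N_1/(\delta k)$, stack everything vertically; otherwise sacrifice one returned rectangle $R$ of height (hence width) at least $N_1/(\delta k)$ and lay all of $\mathcal W$ side by side inside its footprint, since $w\cdot N_1/(\delta k^2)\leq N_1/(\delta k)\leq w(R)$. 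The single sacrificed rectangle costs one unit, absorbed by $\eps k>1$ after the preliminary brute-force for $k\leq 1/\eps$ (which is where the additive $(1/\eps+|\R|)^{\O(1/\eps)}$ term comes from). I would encourage you to look for this kind of volume argument before reaching for resource augmentation: the total area of all narrow items is $O(N_1^2/(\delta^2k^3))$, so a single large item's footprint suffices to house them.
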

\begin{proof}
    We present the algorithm $\mathscr B$.
    Without loss of generality, we can assume $k > 1/\eps$, as otherwise the number of rectangles in the sought packing is at most $1/\eps$ and we can solve the problem in time $(1/\eps+|\R|)^{\O(1/\eps)}$ by applying Lemma~\ref{lem:algo_bounded_rects} to every $k$-tuple of rectangles in $\R$.
    
    Let $\cal W$ be the set of rectangles of $\R$ that have width at most $N_1/(\delta k^2)$, and let $w=|\cal W|$. Note that since all rectangles are wide, the rectangles of $\cal W$ also have height bounded by $N_1/(\delta k^2)$.
    If $w\geq k$, then we can immediately construct a packing of size $k$ by stacking any $k$ rectangles of $\cal W$ vertically: they fit in the vertical dimension, because $k\cdot N_1/(\delta k^2)\leq N_2$.
    Otherwise, let $k'=k-w$.
    Run $\mathscr A$ on a modified instance where all rectangles of $\cal W$ are removed, with parameter $k'$. If there is no packing of size $k'$ for this instance, then clearly there is no packing of size $k$ for the original instance, and this conclusion may be reported by the algorithm.
    Otherwise, $\mathscr B$ returns a packing $\S'$ of size at least $(1-\eps)k'$ consisting of rectangles from $\R\setminus \cal W$.
    If $\S'$ consists only of rectangles of height at most $N_1/(\delta k)$, then we can again immediately obtain a  packing of size $k$ by stacking the rectangles of $\S'\cup \cal W$ vertically; again they fit in the vertical dimension, because $k\cdot N_1/(\delta k)\leq N_2$.
    Otherwise, we modify $\S'$ by removing any single rectangle $R$ present in $\S'$ whose height (and therefore also width) is at least $N_1/(\delta k)$, and putting all the rectangles of $\cal W$ into the space freed by the removal of $R$, by simply stacking them horizontally. They fit horizontally because $w\cdot N_1/(\delta k^2)\leq k\cdot N_1/(\delta k^2)=N_1/(\delta k)\leq w(R)$, and their heights are not greater than the height of $R$. The obtained modified packing $\S'$ is returned by the algorithm.

    It is clear that the algorithm outputs a packing and that when it concludes that there is no packing of size $k$, this conclusion is correct. What remains to show is that the packing eventually output by the algorithm has always size at least $(1-2\eps)k$. And indeed, the algorithm always is able to pack all rectangles packed in $\S'$, except for possibly one rectangle removed to accommodate $\cal W$, and all rectangles of $\cal W$. Hence, the packing output by the algorithm has always size at least
    \[(1-\eps)k'-1+w=(1-\eps)k-(1-\eps)w-1+w\geq (1-\eps)k-1> (1-2\eps)k,\]
    because $\eps k>1$ due to $k>1/\eps$.
\end{proof}

Now, \Cref{thm:main} follows immediately by combining the algorithm of \Cref{lem:wide-instance} with the reduction of \Cref{lem:large_width}. Observe that the running time is $\delta(B)^{\O(k)}\cdot (k+1/\eps)^{\O(k+1/\eps^2)}\cdot (|\R|\|B\|)^{\O(1/\eps^2)}$, as promised.

\section{Conclusion}

The correctness of our entire algorithm heavily relies on the assumption that every input rectangle is wide.
Indeed, this assumption is used in the greedy arguments in the proof of \Cref{lem:large_width}, which allows us to reduce to the case when every rectangle has a substantial width: at least $N_1/\mathrm{poly}(\delta(B),k)$. This assumption is again heavily used later on: in the proof of \Cref{lem:structural} it ensures that upon removing the rectangles corresponding to an $st$-separator in the conflict graph, there is enough space available for vertical shifting. This eventually leads to rounding the rectangles so that there are only $\mathrm{poly}(\delta(B),k)$ different possible widths, and thus effectively bounding the number of candidate rectangles to $\mathrm{poly}(\delta(B),k)$. So while the original problem --- the existence of a parameterized approximation scheme for {\sc{2D Knapsack}} --- remains open, we hope that the new structural techniques proposed in this work might give insight leading to its resolution.

\bibliography{bibliography}{}
\bibliographystyle{plain}

\end{document}